\tikzset{
  big dot/.style={
    circle, inner sep=0pt, 
    minimum size=3mm, fill=gray
 }
}
\newbox\subfigbox 
\newenvironment{subfloat}
	{\def\caption##1{\gdef\subcapsave{\relax##1}}%
	\let\subcapsave=\@empty 
	\let\sf@oldlabel=\label
	\def\label##1{\xdef\sublabsave{\noexpand\label{##1}}}%
	\let\sublabsave\relax 
	\setbox\subfigbox\hbox
	\bgroup}
	{\egroup 
	\let\label=\sf@oldlabel
	\subfigure[\subcapsave]{\box\subfigbox}}%
\newcommand{\cl}[1]{\color{blue}CL: #1\color{black}\ }
\newcommand{\lp}[1]{\color{magenta}LP: #1\color{black}\ }
\newcommand{\cc}[1]{\color{red}CC: #1\color{black}\ }
\newcommand{\ignore}[1]{}
\newcommand{\eg}{\textit{e.g.}\xspace}
\newcommand{\ie}{\textit{i.e.}\xspace}
\renewenvironment{proof}[1][\proofname]{\par
  \vspace{-\topsep}
  \normalfont
  \topsep0pt \partopsep0pt 
  \trivlist
  \item[\hskip\labelsep
        \itshape
    #1\@addpunct{.}]\ignorespaces
}{%
  \endtrivlist\@endpefalse
  \addvspace{6pt plus 6pt} 
}
\newif\ifARXIV
\newtheorem{definition}{Definition}{\bfseries}{\itshape}
\newtheorem{proposition}{Proposition}{\bfseries}{\itshape}
\newtheorem{invariant}{Property}{\bfseries}{\itshape}
\newcommand{\trefle}{\xspace\small$\clubsuit$\xspace}
\newcommand{\pique}{\xspace\small$\spadesuit$\xspace}
\def\@fnsymbol#1{\ensuremath{\ifcase#1\or \dagger\or \*\or \ddagger\or
   \mathsection\or \mathparagraph\or \|\or **\or \dagger\dagger
   \or \ddagger\ddagger \else\@ctrerr\fi}}
\newenvironment{algo}{%
  \algorithm
}{%
  \endalgorithm
}
\def\PIDS{\ensuremath{\cal ID}\xspace}
\def\send#1#2{%
\ifmmode
  \textrm{{\bf Send} }#1\textrm{ {\bf to} }#2
\else
  {\bf Send} $#1$ {\bf to} $#2$
\fi}
\def\bsend#1#2{%
\ifmmode
  \textrm{{\bf Send} }#1\;
  ~~\textrm{ {\bf to} }#2
\else
  {\bf Send} $#1$ \;
  ~~{\bf to} $#2$
\fi}
\def\recv#1#2{%
\ifmmode
  \textrm{{\bf Recv} }#1\textrm{ {\bf from} }#2
\else
  {\bf Recv} $#1$ {\bf from} $#2$
\fi}
\def\connect#1{%
\ifmmode
  \textrm{{\bf Connect to} }#1
\else
  {\bf Connect to} $#1$
\fi}
\def\bottom{\ensuremath\perp}
\title{Formally Proving and Enhancing a Self-Stabilising Distributed Algorithm}
\author{Camille Coti\inst{1} \and Charles Lakos\inst{2} \and Laure Petrucci\inst{1}}
\institute{%
LIPN, CNRS UMR 7030, Universit\'e Paris 13, Sorbonne Paris Cit\'e \\
99, avenue Jean-Baptiste Cl\'ement \\ F-93430 Villetaneuse, FRANCE \\
\email{\{Camille.Coti,Laure.Petrucci\}@lipn.univ-paris13.fr}
\and
Computer Science, University of Adelaide \\
Adelaide, SA 5005, AUSTRALIA \\
\email{Charles.Lakos@adelaide.edu.au} \\
}
\begin{document}

\ignore{ 
\author{%
  \begin{tabular}{ccc}
    Camille Coti\trefle \and Charles Lakos\pique\thanks{Work conducted while Charles Lakos was invited professor at University Paris 13.} \and Laure Petrucci\trefle
  \end{tabular} \\
  \small \trefle LIPN, CNRS UMR 7030, Universit\'e Paris 13, Sorbonne Paris Cit\'e \\
  \small 99, avenue Jean-Baptiste Cl\'ement \\ 
  \small F-93430 Villetaneuse, FRANCE \\
  \small \texttt{\{Camille.Coti,Laure.Petrucci\}@lipn.univ-paris13.fr}\\
  \small \pique Computer Science, University of Adelaide \\
  \small Adelaide, SA 5005, AUSTRALIA \\
  \small \texttt{Charles.Lakos@adelaide.edu.au}
}
} 

\maketitle

\begin{abstract}
  This paper presents the benefits of formal modelling and
verification techniques for self-stabilising distributed algorithms.
An algorithm is studied, that takes a set of processes
connected by a tree topology and converts it to a ring configuration.
The Coloured Petri net model not only facilitates the proof that the
algorithm is correct and self-stabilising but also easily shows that
it enjoys new properties of termination and silentness.
Further, the formal results show how the algorithm
can be simplified without loss of generality.
\end{abstract}

\section{Introduction}
\label{sec:intro}

\paragraph*{Goals and contributions}
This paper aims at using a formal model and associated verification techniques in order to prove properties of a self-stabilising distributed algorithm. Although the algorithm considered~\cite{BCHLD09} was shown self-stabilising, the proof was lengthy and cumbersome. Using formal models, in this case Petri nets which are particularly well-suited for such distributed algorithms, provides a much shorter and more elegant proof. It also allows for easily deriving additional properties which were not proven in the past: correctness, termination and silentness. Finally, reasoning on the model leads to a simplification of the algorithm, and its gains in terms of exchanged messages. This simplification, which now appears straightforward is not obvious at all when considering the distributed algorithm alone.

\paragraph*{Context}
A distributed system consists of a set of processes or processors. Each process $k$
has a local state $s_k$, and processes communicate with each other
via communication channels that have local outgoing and incoming
queues. 
The \emph{configuration}, $C$, of a distributed system is
the set of local states of its processes 
together with the queues of messages that have yet to be
sent or received.

\emph{Self-stabilising} algorithms are distributed algorithms that, starting
from \emph{any initial configuration} and executing the set of
possible transitions in \emph{any order}, make the system converge to a
\emph{legitimate configuration}. This property makes self-stabilising
algorithms suitable for many fields of application, including
\emph{fault tolerant systems}: failures take the system out of its legitimate
state and the algorithm is executed to reach a legitimate
configuration again. Other properties that may be of interest are the 
\emph{closure property} which ensures that once it has reached a 
legitimate configuration, the system remains in it, and the \emph{silent} property
which states that once it has reached a
legitimate state, inter-process communications are \emph{fixed}. This means
that the communication channels hold the same value: processes stop
communicating (message queues and communication channels remain empty)
or repeatedly send the same message to their neighbours.


It is not always an easy task to prove the aforementioned 
properties related to self-stabilisation. It is
necessary to prove that these properties hold for any possible
execution and starting from any initial state. Some techniques exist
for this, coming from traditional distributed algorithm techniques, or
specific techniques such as \emph{attraction} and \emph{rewriting}
\cite{BBF99}. Some algorithms can also be formalised using a proof
assistant \cite{C02}. 

One technique to prove that a system is self-stabilising with respect to
a set of legitimate states consists in considering a \emph{norm
  function} which is representative of the state of the system, and
proving that this function is integral, positive and \emph{strictly decreasing} 
as the algorithm is executed \cite{gouda95}. 


\begin{figure*}[!!htb]
\centering 
\hfill
\subfigure[Initial tree] {
  \resizebox{.35\textwidth}{!}{\begin{tikzpicture}
\node [big dot, label=P0] (P0) {};

\node [big dot, below left = of P0, label=right:P1] (P1) {};
\node [big dot, below right = of P0, label=above right:P2] (P2) {};

\node [big dot, below left = of P1, label=right:P3, node distance=2cm] (P3) {};
\node [big dot, below right = of P1, label=P4, node distance=2cm] (P4) {};
\node [big dot, below right= of P2, label=below:P5, node distance=2cm] (P5) {};

\node [big dot, below left = of P3, label=below:P6, node distance=2cm] (P6) {};
\node [big dot, below = of P3, label=below:P7, node distance=2cm] (P7) {};
\node [big dot, below right = of P3, label=below:P8, node distance=2cm] (P8) {};

\node [big dot, below = of P4, label=below:P9, node distance=2cm] (P9) {};

\draw (P0) -- (P1);
\draw (P0) -- (P2);
\draw (P1) -- (P3);
\draw (P1) -- (P4);
\draw (P2) -- (P5);
\draw (P3) -- (P6);
\draw (P3) -- (P7);
\draw (P3) -- (P8);
\draw (P4) -- (P9);
\end{tikzpicture}}
  \label{fig:algo:initial}
} \hfill %
\subfigure[First step: $\mathit{FC}$ messages to the leftmost children and $\mathit{Info}$ messages from the leaves] {
\resizebox{.35\textwidth}{!}{\begin{tikzpicture}
\node [big dot, label=P0] (P0) {};

\node [big dot, below left = of P0, label=right:P1] (P1) {};
\node [big dot, below right = of P0, label=above right:P2] (P2) {};

\node [big dot, below left = of P1, label=above left:P3, node distance=2cm] (P3) {};
\node [big dot, below right = of P1, label=P4, node distance=2cm] (P4) {};
\node [big dot, below right= of P2, label=below:P5, node distance=2cm] (P5) {};

\node [big dot, below left = of P3, label=below:P6, node distance=2cm] (P6) {};
\node [big dot, below = of P3, label=below:P7, node distance=2cm] (P7) {};
\node [big dot, below right = of P3, label=below:P8, node distance=2cm] (P8) {};

\node [big dot, below = of P4, label=below:P9, node distance=2cm] (P9) {};

\draw (P0) -- (P1);
\draw (P0) -- (P2);
\draw (P1) -- (P3);
\draw (P1) -- (P4);
\draw (P2) -- (P5);
\draw (P3) -- (P6);
\draw (P3) -- (P7);
\draw (P3) -- (P8);
\draw (P4) -- (P9);

\draw [->,transform canvas={xshift=-3mm,yshift=1mm}] (P0) -- (P1) node [midway, sloped, above] {$FC$};
\draw [->,transform canvas={xshift=-3mm,yshift=1mm}] (P1) -- (P3) node [midway, sloped, above] {$FC$};
\draw [->,transform canvas={xshift=-3mm,yshift=1mm}] (P3) -- (P6) node [midway, sloped, above] {$FC$};
\draw [->,transform canvas={xshift=3mm,yshift=1mm}] (P2) -- (P5) node [midway, sloped, above] {$FC$};
\draw [->,transform canvas={xshift=3mm,yshift=1mm}] (P4) -- (P9) node [midway, sloped, above] {$FC$};

\draw [->,transform canvas={xshift=-6mm,yshift=4mm}] (P6) -- (P3) node [midway, sloped, above] {$info_6$};
\draw [->,transform canvas={xshift=-3mm,yshift=-2mm}] (P7) -- (P3) node [midway, sloped, above] {$info_7$};
\draw [->,transform canvas={xshift=3mm,yshift=1mm}] (P8) -- (P3) node [midway, sloped, above] {$info_8$};
\draw [->,transform canvas={xshift=-3mm,yshift=1mm}] (P9) -- (P4) node [midway, sloped, above] {$info_9$};
\draw [->,transform canvas={xshift=-3mm,yshift=-1mm}] (P5) -- (P2) node [midway, sloped, below] {$info_5$};
\end{tikzpicture}}
  \label{fig:algo:fc}
} \hfill
\\
\hfill
\subfigure[$\mathit{AC}$, $\mathit{BC}$ and propagation of $\mathit{Info}$ messages] {
  \resizebox{.35\textwidth}{!}{\begin{tikzpicture}
\node [big dot, label=P0] (P0) {};

\node [big dot, below left = of P0, label=left:P1] (P1) {};
\node [big dot, below right = of P0, label=above right:P2] (P2) {};

\node [big dot, below left = of P1, label=above left:P3, node distance=2cm] (P3) {};
\node [big dot, below right = of P1, label=right:P4, node distance=2cm] (P4) {};
\node [big dot, below right= of P2, label=below:P5, node distance=2cm] (P5) {};

\node [big dot, below left = of P3, label=below:P6, node distance=2cm] (P6) {};
\node [big dot, below = of P3, label=below:P7, node distance=2cm] (P7) {};
\node [big dot, below right = of P3, label=below:P8, node distance=2cm] (P8) {};

\node [big dot, below = of P4, label=below:P9, node distance=2cm] (P9) {};

\draw (P0) -- (P1);
\draw [dashed] (P0) -- (P2);
\draw (P1) -- (P3);
\draw [dashed] (P1) -- (P4);
\draw (P2) -- (P5);
\draw (P3) -- (P6);
\draw [dashed] (P3) -- (P7);
\draw [dashed] (P3) -- (P8);
\draw (P4) -- (P9);

\draw [->,transform canvas={xshift=-3mm,yshift=-2mm}] (P1) -- (P4) node [midway, sloped, below] {$AC_8$};
    \draw [->,in=90] (P0.east) to node[above,sloped]{BC}  (P5.east);
\draw [->,transform canvas={xshift=0mm,yshift=0mm}] (P7) -- (P6) node [midway, sloped, below] {$BC$};
\draw [->,transform canvas={xshift=0mm,yshift=0mm}] (P8) -- (P7) node [midway, sloped, below] {$BC$};
\draw [->,transform canvas={xshift=-3mm,yshift=3mm}] (P1) -- (P0) node [midway, sloped, above] {$info_9$};

\end{tikzpicture}}
  \label{fig:algo:acbc}
} \hfill %
\subfigure[End of the algorithm: the ring is built with $\mathit{BC}$
  and $\mathit{FC}$ connections] {
\resizebox{.35\textwidth}{!}{\begin{tikzpicture}
\node [big dot, label=P0] (P0) {};

\node [big dot, below left = of P0, label=left:P1] (P1) {};
\node [big dot, below right = of P0, label=above right:P2] (P2) {};

\node [big dot, below left = of P1, label=above left:P3, node distance=2cm] (P3) {};
\node [big dot, below right = of P1, label=above:P4, node distance=2cm] (P4) {};
\node [big dot, below right= of P2, label=below:P5, node distance=2cm] (P5) {};

\node [big dot, below left = of P3, label=below:P6, node distance=2cm] (P6) {};
\node [big dot, below = of P3, label=below:P7, node distance=2cm] (P7) {};
\node [big dot, below right = of P3, label=below:P8, node distance=2cm] (P8) {};

\node [big dot, below = of P4, label=below:P9, node distance=2cm] (P9) {};

\draw [red] (P0) -- (P1);
\draw [dashed] (P0) -- (P2);
\draw [red] (P1) -- (P3);
\draw [dashed] (P1) -- (P4);
\draw [red] (P2) -- (P5);
\draw [red] (P3) -- (P6);
\draw [dashed] (P3) -- (P7);
\draw [dashed] (P3) -- (P8);
\draw [red] (P4) -- (P9);

    \draw [blue,in=90] (P0.east) to node[above,sloped]{}  (P5.east);
\draw [green,transform canvas={xshift=0mm,yshift=0mm}] (P7) -- (P6) node [midway, sloped, below] {};
\draw [green,transform canvas={xshift=0mm,yshift=0mm}] (P8) -- (P7) node [midway, sloped, below] {};
\draw [green,transform canvas={xshift=0mm,yshift=0mm}] (P4) -- (P8) node [midway, sloped, below] {};
\draw [green,transform canvas={xshift=0mm,yshift=0mm}] (P2) -- (P9) node [midway, sloped, below] {};

\end{tikzpicture}}
  \label{fig:algo:end}
} \hfill
\caption{\label{fig:algo}Execution of the algorithm} 
\end{figure*}
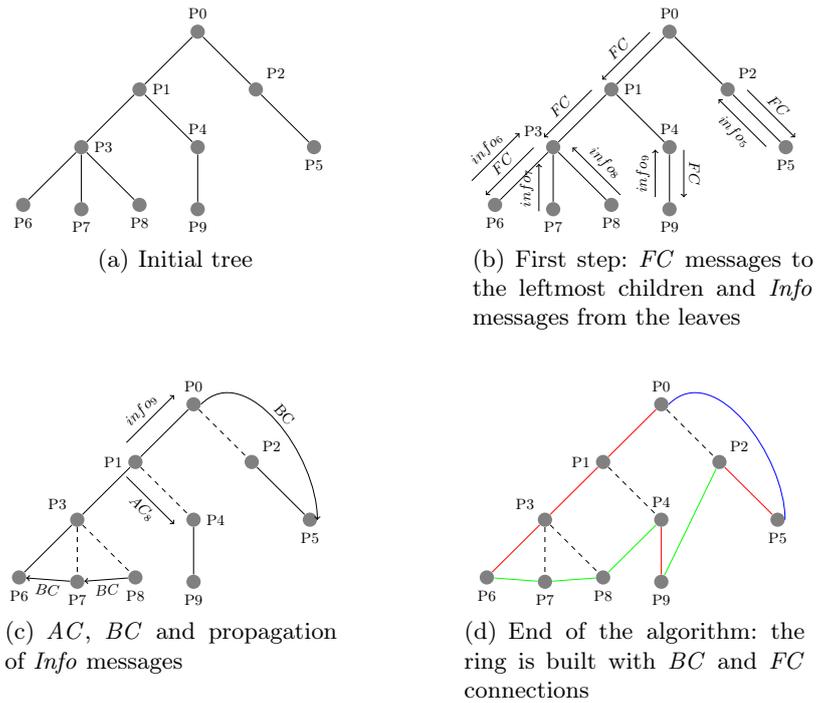

\paragraph*{Outline of the paper}

In this paper, we examine a distributed, self-stabilising algorithm
that, given a set of processes interconnected by any tree topology,
builds a ring topology between them. This algorithm is interesting since 
it is a stepping stone to a binomial graph (BMG) configuration which has several 
desirable properties, such as performance (low diameter and degree)
and robustness (in terms of number of processes and communication
links that can fail without disconnecting the graph) \cite{ABD}.

Section~\ref{sec:algorithm} presents the algorithm and describes its 
behaviour.  In Section~\ref{sec:CPN} we derive a Coloured Petri Net model of the 
algorithm. The Coloured Petri Net formalism is ideally suited for 
this task because of its support for concurrency and the variability of sequencing 
and timing of distributed processes.  Further, the graphical representation of 
Petri Nets can help to highlight the flow of information in the algorithm.
 This immediately leads to some simplifications which are 
presented in Section~\ref{sec:CPNsimple}, along with certain invariant 
and liveness properties which can be deduced directly from the model.
In Section~\ref{sec:termination} we prove that the algorithm terminates,
effectively presenting a norm function.  Then, in Section~\ref{sec:correctness}
we prove that the algorithm is correct.  As a result of the properties proved
above, it is possible to simplify the algorithm further, and this is presented
in Section~\ref{sec:Algosimple}.  In section~\ref{sec:Experiments} we validate
our results with an automated tool.  Finally, the conclusions are presented in 
Section~\ref{sec:conclusions}.

\ignore{\cc{Modelling the algorithm can help reasoning on the algorithm and check
that such a proof works for all the possible executions. 

Moreover, exhibiting some properties of the algorithm from the model
can simplify the algorithm itself.

Here, we examine the case of an algorithm that, starting from a tree
topology between processes, establishes a ring topology.}
}

\section{The algorithm}
\label{sec:algorithm}

In this paper, we focus on a distributed, self-stabilising algorithm
that, given a set of processes interconnected by any tree topology,
builds a ring topology between them. This algorithm was originally
described in \cite{BCHLD09} and we reproduce the pseudocode 
 in algorithm \ref{algo:tree2ring}. 

{
\setlength{\algomargin}{0em}
\SetInd{0em}{.75em}
\LinesNotNumbered
\begin{algorithm*}[!!htb]
\scriptsize
\begin{minipage}{.48\textwidth}
\Const{\\
~ $\mathit{Parent}: \PIDS$ \tcc{empty if I am the root of the tree} 
~ $\mathit{Children}: \mathit{List}(\PIDS)$  \tcc{empty if I am a leaf process} 
~ $\mathit{Id}: \PIDS$ \tcc{my own identifier}
 }
\KwOut{\\
  ~$\mathit{Pred}:  \PIDS$ \tcc{both empty start-up time}
  ~$\mathit{Succ}:  \PIDS$
}

\Init{

\lnl{tr1}\Rule($\mathit{Children} \neq \emptyset \rightarrow$){ 
   \tcc{I have children: send a F\_Connect message to my leftmost child.}
   $\mathit{Succ} = \mathit{First}(\mathit{Children})$ \;
   $\send{(\mathit{F\_Connect}, \mathit{Id})}{\mathit{Succ}}$ \;
}

\lnl{tr2}\Rule($\mathit{Children} = \emptyset \rightarrow$){
      \tcc{I am a leaf process}
      $\send{(\mathit{Info}, \mathit{Id})}{\mathit{Parent}}$ \;
    }
}

\Run{

\lnl{tr3}\Rule($\recv{(\mathit{F\_Connect}, I)}{p} \rightarrow$){
     \tcc{I received a F\_Connect; from my parent? If so, here is my predecessor.}
     \lIf{$p = \mathit{Parent}$}{$\mathit{Pred} = I$}
    }
    }
  \end{minipage}
\hfill
\begin{minipage}{.48\textwidth}
\RunContinued{
\lnl{tr4}\Rule($\recv{(\mathit{Info}, I)}{p} \rightarrow$){
     \tcc{I have received an Info message. From whom?}
     \If{$p \in \mathit{Children}$}{
       let $ q = \mathit{next}(p, \mathit{Children})$ \;
       \eIf{$q \neq \bottom$}{$\send{(\mathit{Ask\_Connect},I)}{q}$ \;}
       {\eIf{$\mathit{Parent} \neq \bottom$}{$\send{(\mathit{Info}, I)}{\mathit{Parent}}$\;}
         {
           $\mathit{Pred} = I$ \;
           $\send{(\mathit{B\_Connect}, \mathit{Id})}{I~}$\;
         }}
     }
   }

\lnl{tr5}\Rule($\recv{(\mathit{Ask\_Connect}, I)}{p} \rightarrow$){
     \tcc{I am being asked to connect to a leaf process.}
     $\mathit{Pred} = I$\;
     $\send{(\mathit{B\_Connect}, \mathit{Id})}{I}$ \;
   }

\lnl{tr6}\Rule($\recv{(\mathit{B\_Connect}, I)}{p} \rightarrow$){
     \tcc{I received a B\_Connect; here is my successor.}
     $\mathit{Succ} = I$ \;
   }
}
\end{minipage}
\caption{self-stabilising algorithm that builds a ring from any
     tree\label{algo:tree2ring}} 
\end{algorithm*}
}

This algorithm is meant to establish a fault-tolerant, scalable communication
infrastructure. For instance, it can be used to support the execution of
parallel processes \cite{BGL00} such as the processes of MPI
programs \cite{Forum94}. Robust, fault-tolerant run-time
environments \cite{HARNESS} are necessary to support middleware-level, automatic
fault tolerance \cite{SC02} or application-level
fault-tolerance \cite{GLFT02,FTLA,diskless98}. In order to be efficient at large
scale, tree topologies are often prefered \cite{mrnet}. However, trees are not
robust and connectivity can be lost if intermediate processes fail. Resilient
extensions of trees introduce additional inter-process connections, such as
k-sibling trees \cite{ksiblingtree} or redundant storage of connection
information \cite{ftMRNet}. The Binomial Graph topology (BMG) is scalable and
allows efficient communications \cite{BMG,Bruck97} while being
robust. More precisely, a BMG made of $N$ processes has a diameter
in $O(log N)$, therefore a message is routed between two processes in
at most $O(log N)$ hops. It has a degree in $O(log N)$, which means
that every process needs to handle $O(log N)$ open connections. Every
node sees the other nodes along a binomial tree: therefore, efficient
(in terms of number of messages) collective communications can be
implemented directly on top of a BGM. Its node connectivity and its
link connectivity are both in $O(log N)$, which means that $O(log N)$
nodes or $O(log N)$ network connections can fail without it topology
to be disconnected \cite{ABD}.

Processes are spawned on a set of resources and each process is connected
to the process that spawned it. This results in a tree topology that
matches the shape of the spawning operation
(figure \ref{fig:algo:initial}). 

To connect with another process, a process needs to know the remote
process's \emph{communication information} (\eg an IP address and a
port). This communication information and the name of a process in a
naming system (\eg a rank) form the \emph{identity} of this process.
In the initial state, processes know their own identity, that of their 
parent and of their children. The algorithm allows for \emph{concurrent} or even
parallel propagation of the 
identity of leaf processes in order to establish connections between
some leaves and some intermediate processes. The algorithm sets
pointers to reflect the node order of depth-first traversal. It achieves
this by propagating information bottom-up from the leaves, thereby
allowing for \emph{parallel} execution. Besides, inter-process 
communications are \emph{asynchronous}.

In the algorithm, \PIDS denotes the set of possible process identities.
$\mathit{List}(\PIDS)$ is an ordered list of identities; $\mathit{First}(L)$
returns the first element in the list $L$, and 
$\mathit{next}(e, L)$ returns the element following $e$ in $L$. If there is no
such element, these functions return $\perp$, which is also used to
denote a non-existing information.  

During the first step, each non-leaf process sends
an \emph{F\_Connect} (or $\mathit{FC}$ for short) message to its oldest (\ie leftmost) child
(rule \ref{tr1}). Thus, each process that receives an $\mathit{FC}$
message is
connected to its predecessor (rule \ref{tr3}). Concurrently, each
leaf process sends its identity to its parent in
an \emph{Info} message (rule \ref{tr2} and figure \ref{fig:algo:fc}).

Then each time a process receives an \emph{Info} message from one of
its children (rule \ref{tr4}), it forwards this information to the younger
sibling as
an \emph{Ask\_Connect} message (or $\mathit{AC}$, see
figure \ref{fig:algo:acbc}). The message forwarded here contains the
identity of the rightmost leaf process of its older
sibling. Therefore, the child that receives the $\mathit{AC}$ message
will then be able to send a \emph{B\_Connect} (or $\mathit{BC}$) message
(rule \ref{tr5}) to the rightmost leaf process of its older (\ie left)
 sibling. As a consequence, each process that receives a $\mathit{BC}$ is
connected to its predecessor (rule \ref{tr6}). The root process
receives the identity of the rightmost leaf of the tree from its
youngest (\ie rightmost) child and establishes a $\mathit{BC}$ connection with it.  

Eventually, all the leaves are connected to another process (figure
\ref{fig:algo:end}, where the colours relate to the different transitions sequences of 
Section~\ref{sec:correctness}), and the set of $\mathit{FC}$ and $\mathit{BC}$ connections forms
the ring.



\section{A Coloured Petri Net Model}
\label{sec:CPN}

Algorithm~\ref{algo:tree2ring} was proved to be self-stabilising in~\cite{BCHLD09}, but the proof was lengthy and cumbersome. Moreover, the algorithm enjoys additional properties that were not established before, such as being silent.

Therefore, we build a formal model of the algorithm which provides the following features:
\begin{itemize}
    \item a graphical representation for an easier and better understanding of the system's behaviour;
    \item a global view of the system states at a glance, allowing to focus on the flow of messages;
    \item facilities for property analysis and formal reasoning.
\end{itemize}

\begin{figure*}[!!htb]
\begin{center}
\subfigure[Initialisation phase\label{fig:tree2ring:init}]{
 \scalebox{.6}{
    \begin{tikzpicture}[->,>=stealth,node distance=3cm,inner sep=1.5pt,auto]
        \node [place,label=below:Pred,label=left:{(\it Proc.all$\setminus\{$fake$\}) \times \{$fake$\}$}] (pred) {};
        \node [place,label=below:InitP,label=left:{\it Proc.all$\setminus\{$fake$\}$},below of=pred,node distance=1.5cm] (allP) {};
        \node [place,label=above:Succ,above of=allP,label=left:{(\it Proc.all$\setminus\{$fake$\}) \times \{$fake$\}$}] (succ) {};
        \node [transition,label=below:T1,label=above:{[c$\neq$fake]},right of=succ] (T1) {}
                edge [pre] node [swap] {p} (allP)
                edge [pre,bend right] node [swap] {(p,q)} (succ)
                edge [post,bend left] node [swap] {(p,c)} (succ);
        \node [transition,label=above:T2,label=below:{[f$\neq$fake]},right of=allP] (T2) {}
                edge [pre] node [swap] {p} (allP);
        \node [place,label=below:TreeTopology,right of=T2,label=right:{\it Tree}] (topo) {}
                edge [<->] node [swap,near start] {(p,c,1)} (T1)
                edge [<->] node [swap] {(p,fake,1)+(f,p,n)} (T2);
        \node [place,label=above:Messages,right of=T1] (mess) {}
                edge [pre,cyan] node [swap,cyan] {(FC,p,p,c)} (T1)
                edge [pre,red] node [near start,red] {(Info,p,p,f)} (T2);
    \end{tikzpicture}

}
}
\hfill
\subfigure[Termination phase\label{fig:tree2ring:term}]{
\scalebox{.6}{
    \begin{tikzpicture}[->,>=stealth,node distance=3cm,inner sep=1.5pt,auto]
        \node [place,label=below:Pred] (pred) {};
        \node [place,label=above:Succ,above of=pred] (succ) {};
        \node [transition,label=above:T5,right of=pred] (T5) {}
                edge [pre,bend right] node [swap] {(r,q)} (pred)
                edge [post,bend left] node [swap] {(r,I)} (pred);
        \node [transition,label=above:T6,right of=succ] (T6) {}
                edge [pre,bend right] node [swap] {(r,q)} (succ)
                edge [post,bend left] node [swap] {(r,I)} (succ);
        \node [place,label=below:Messages,right of=T5] (mess) {}
                edge [pre,bend right,ForestGreen] node [swap,ForestGreen] {(BC,r,r,I)} (T5)
                edge [post,bend left,blue] node [swap,blue] {(AC,I,p,r)} (T5)
                edge [post,ForestGreen] node [swap,near end,ForestGreen] {(BC,I,p,r)} (T6);
    \end{tikzpicture}
}
}\\
\subfigure[Main phase\label{fig:tree2ring:main}]{
 \scalebox{.6}{
   \begin{tikzpicture}[->,>=stealth,node distance=3cm,inner sep=1.5pt,auto]
        \node [place,label=below:Pred] (pred) {};
        \node [transition,label=right:T4c,right of=pred] (T4c) {}
                edge [pre,bend right=20] node {(r,q)} (pred)
                edge [post,bend left=20] node {(r,I)} (pred);
        \node [place,label=above:Messages,above right of=T4c] (mess) {}
                edge [pre,bend right=20,ForestGreen] node [swap,ForestGreen] {(BC,r,r,I)} (T4c)
                edge [post,bend left=20,red] node [swap,red,xshift=1cm] {(Info,I,p,r)} (T4c);
        \node [transition,label=above:T3,left of=mess] (T3) {}
                edge [pre,cyan] node [cyan] {(FC,I,p,r)} (mess)
                edge [pre,bend right=20] node [swap] {(r,q)} (pred)
                edge [post,bend left=20] node [swap] {(r,I)} (pred);
        \node [place,label=below:TreeTopology,below right of=T4c] (topo) {}
                edge [<->] node {(r,p,n)+(fake,r,1)+(r,fake,n+1)} (T4c);
        \node [transition,label=right:T4a,below right of=mess,label=left:{[q$\neq$fake]}] (T4a) {}
                edge [<->] node [swap,xshift=1cm] {(r,p,n)+(r,q,n+1)} (topo)
                edge [pre,bend right=20,red] node [swap,at start,red] {(Info,I,p,r)} (mess)
                edge [post,bend left=20,blue] node [xshift=1cm,blue] {(AC,I,r,q)} (mess);
        \node [transition,label=right:T4b,label=left:{[q$\neq$fake]},right of=T4a] (T4b) {}
                edge [<->,bend left=10] node {(q,r,m)+(r,p,n)+(r,fake,n+1)} (topo)
                edge [post,bend right=10,red] node [swap,red] {(Info,I,r,q)} (mess);
        \draw [pre,red] (T4b) |- node [swap,red] {(Info,I,p,r)} (mess);
    \end{tikzpicture}
}
}
\end{center}
\caption{Model corresponding to Algorithm \ref{algo:tree2ring}.}
\end{figure*}
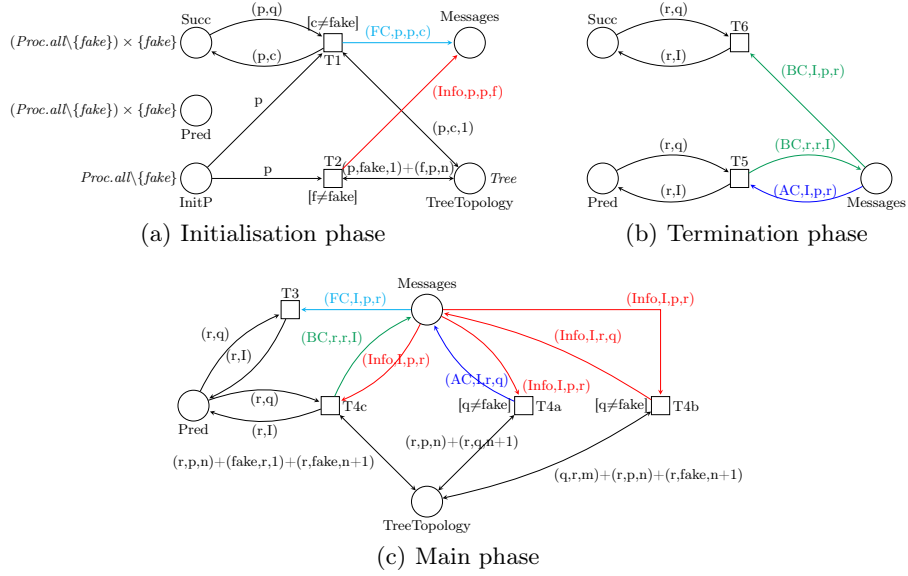

A coloured Petri net (CPN) model~\cite{Jen92,KJ-EATCS-94} is thus 
designed, capturing the flow of information and the individual actions in 
its graph, and the actual essential values in the data it manipulates. For 
the sake of readability, it is presented in $3$ parts
(Figures~\ref{fig:tree2ring:init}, \ref{fig:tree2ring:main}
and \ref{fig:tree2ring:term}) corresponding to the initialisation
phase, the core of the algorithm, and the termination phase,
respectively. Should it be presented in a single figure, the places
with the same name would be fused together. Also note that the arcs
connected to place \emph{Messages} (and the associated arc
inscriptions) are coloured according to the messages they handle. This
has no formal meaning but enhances readability, which is particularly
useful when illustrating message flow in the proofs. 

\subsection{Data Types Declaration}
\label{sec:CPN:decl}

The type declarations in Figure~\ref{fig:CPN:decl:type} show all data
types that are to be used in the CPN. First, the processes are
identified as members of a set \texttt{Proc} of processes (of the
form \texttt{$\{$P1,\ldots$\}$}). This set also includes a
particular \texttt{fake} process that is used to denote that the
parent or child of a process does not exist (when it is respectively
the root or a leaf). This corresponds to $\perp$ in the algorithm.
Type \texttt{2Proc} is a pair of process names.
Then \texttt{MessType} describes all four types of messages: \texttt{FC}, \texttt{AC}, \texttt{BC}, and \texttt{Info}. A message also contains a process identifier, its sender and its receiver.

The algorithm we model makes use of the tree topology with parent and child relation plus the next child in a parent's list. To model this, we use triples consisting of the parent, the child, and the number of the child in the list of children. The \texttt{fake} child is always the last one in the list, thus denoting its end.

For example, the tree in Figure~\ref{fig:algo:initial} is modelled by the set of triples:

{\tt $\{$(fake,P0,1),(fake,fake,2),(P0,P1,1),(P0,P2,2),(P0,fake,3),\\
(P1,P3,1),(P1,P4,2),(P1,fake,3),(P2,P5,1),(P2,fake,2),(P3,P6,1),\\
(P3,P7,2),(P3,P8,3),(P3,fake,4),(P4,P9,1),(P4,fake,2),(P5,fake,1),\\
(P6,fake,1),(P7,fake,1),(P8,fake,1),(P9,fake,1)$\}$}.

\begin{figure}[!!htb]
\centering
\begin{subfloat}
\begin{minipage}{0.48\textwidth}
\begin{verbatim}
Proc = set of processes U {fake};
2Proc = Proc x Proc;
MessType = {FC,AC,BC,Info};
Mess = MessType x Proc x Proc x Proc;
TreeStructure = Proc x Proc x Int;
\end{verbatim}
\vspace{-0.15cm}
\end{minipage}
\caption{Type declarations\label{fig:CPN:decl:type}}
\end{subfloat}
\quad
\begin{subfloat}
\begin{minipage}{0.43\textwidth}
\begin{verbatim}
InitP: Proc;
Pred, Succ: 2Proc;
Messages: Mess;
TreeTopology: TreeStructure;
\end{verbatim}
\vspace{-0.15cm}
\end{minipage}
\caption{Places declarations\label{fig:CPN:decl:place}}
\end{subfloat}

\vspace{-0.2cm}
\begin{subfloat}
\begin{minipage}{0.45\textwidth}
\begin{verbatim}
Tree = initial topology
c, f, I, p, q, r: Proc;
m, n : Int;
\end{verbatim}
\vspace{-0.15cm}
\end{minipage}
\caption{Variables and constants declarations\label{fig:CPN:decl:var}}
\end{subfloat}
\vspace{-0.25cm}
\caption{CPN declarations\label{fig:CPN:decl}}
\end{figure}

\subsection{Initialisation Phase}
\label{sec:CPN:init}

The initial phase of the algorithm is modelled by the CPN of
Figure~\ref{fig:tree2ring:init}. It describes all the necessary places
with their initial marking (in italics) of the type described in the
places declarations in
Figure~\ref{fig:CPN:decl:place}. Figure~\ref{fig:CPN:decl:var} shows
the types of variables used in arc expressions as well as the
initial \texttt{Tree} topology constant.

\begin{figure*}[!!htb]
\begin{center}
\subfigure[\label{fig:tree2ring:simple:init}Initialisation phase]{
\scalebox{.6}{
    \begin{tikzpicture}[->,>=stealth,node distance=3cm,inner sep=1.5pt,auto]
        \node [place,label=below:InitP,label=left:{\it Proc.all$\setminus\{$fake$\}$}] (allP) {};
        \node [place,label=above:Succ,above of=allP] (succ) {};
        \node [transition,label=below:T1,label=above:{[c$\neq$fake]},right of=succ] (T1) {}
                edge [pre] node [swap] {p} (allP)
                edge [post] node [swap] {(p,c)} (succ);
        \node [transition,label=above:T2,label=below:{[f$\neq$fake]},right of=allP] (T2) {}
                edge [pre] node [swap] {p} (allP);
         \node [place,label=below:TreeTopology,right of=T2,label=right:{\it Tree}] (topo) {}
                edge [<->] node [swap,near start] {(p,c,1)} (T1)
                edge [<->] node [swap] {(p,fake,1)+(f,p,n)} (T2);
        \node [place,label=above:Messages,right of=T1] (mess) {}
                edge [pre,cyan] node [swap,cyan] {(FC,p,p,c)} (T1)
                edge [pre,red] node [near start,red] {(Info,p,p,f)} (T2);
    \end{tikzpicture}
}
}
\hfill
\subfigure[Termination phase\label{fig:tree2ring:simple:term}]{
\scalebox{.6}{
    \begin{tikzpicture}[->,>=stealth,node distance=3cm,inner sep=1.5pt,auto]
        \node [place,label=below:Pred] (pred) {};
        \node [place,label=above:Succ,above of=pred] (succ) {};
        \node [transition,label=above:T5,right of=pred] (T5) {}
                edge [post] node [swap] {(r,I)} (pred);
        \node [transition,label=above:T6,right of=succ] (T6) {}
                edge [post] node [swap] {(r,I)} (succ);
        \node [place,label=below:Messages,right of=T5] (mess) {}
                edge [pre,bend right,ForestGreen] node [swap,ForestGreen] {(BC,r,r,I)} (T5)
                edge [post,bend left,blue] node [swap,blue] {(AC,I,p,r)} (T5)
                edge [post,ForestGreen] node [swap,near end,ForestGreen] {(BC,I,p,r)} (T6);
    \end{tikzpicture}
}
}
\\
\subfigure[Main phase\label{fig:tree2ring:simple:main}]{
\scalebox{.6}{
    \begin{tikzpicture}[->,>=stealth,node distance=3cm,inner sep=1.5pt,auto]
        \node [place,label=below:Pred] (pred) {};
        \node [transition,label=right:T4c,right of=pred] (T4c) {}
                edge [post] node {(r,I)} (pred);
        \node [place,label=above:Messages,above right of=T4c] (mess) {}
                edge [pre,bend right=20,ForestGreen] node [swap,ForestGreen] {(BC,r,r,I)} (T4c)
                edge [post,bend left=20,red] node [swap,red,xshift=1cm] {(Info,I,p,r)} (T4c);
        \node [transition,label=above:T3,left of=mess] (T3) {}
                edge [pre,cyan] node [cyan] {(FC,p,p,r)} (mess)
                edge [post] node [swap] {(r,p)} (pred);
        \node [place,label=below:TreeTopology,below right of=T4c] (topo) {}
                edge [<->] node {(r,p,n)+(fake,r,1)+(r,fake,n+1)} (T4c);
        \node [transition,label=right:T4a,below right of=mess,label=left:{[q$\neq$fake]}] (T4a) {}
                edge [<->] node [swap,xshift=1cm] {(r,p,n)+(r,q,n+1)} (topo)
                edge [pre,bend right=20,red] node [swap,at start,red] {(Info,I,p,r)} (mess)
                edge [post,bend left=20,blue] node [xshift=1cm,blue] {(AC,I,r,q)} (mess);
        \node [transition,label=right:T4b,label=left:{[q$\neq$fake]},right of=T4a] (T4b) {}
                edge [<->,bend left=10] node {(q,r,m)+(r,p,n)+(r,fake,n+1)} (topo)
                edge [post,bend right=10,red] node [swap,red] {(Info,I,r,q)} (mess);
        \draw [pre,red] (T4b) |- node [swap,red] {(Info,I,p,r)} (mess);
    \end{tikzpicture}
}
}
\end{center}
\caption{Simplified model}
\end{figure*}
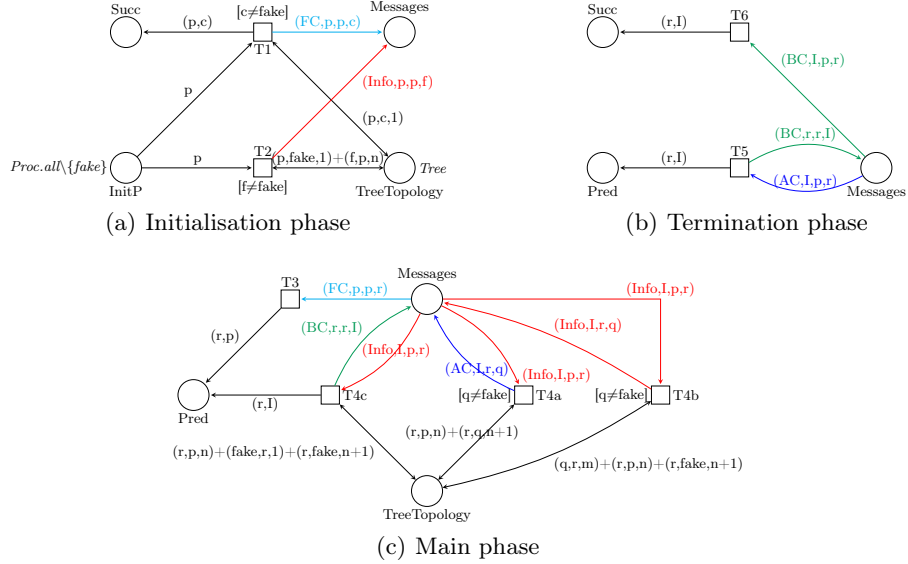

At the start, there is no message, all processes (except the fake one) are ready to start. No process knows a possible successor or predecessor in the ring, hence is associated with the fake process in the \texttt{Pred} and \texttt{Succ} places. The tree topology to be processed is described by the constant \texttt{Tree}.

The initialisation of the \texttt{Pred} and \texttt{Succ} places with
the specific \texttt{fake} process models the fact that, as stated in
section \ref{sec:intro}, self-stabilising algorithms can start
from \emph{any arbitrary initial state}. In our case, we are
initialising the predecessor and successor of each process in the ring
with bogus values.

\ignore{
\lp{(for properties) Note that the tree topology remains unchanged.}
}

Transition \texttt{T1} models rule~\ref{tr1} in the algorithm. A process \texttt{p} with \texttt{c} as first child is processed, sending an \texttt{FC} message with its identity to this child, and updating its successor information with this child's identity.

Every leaf process \texttt{p} executes rule~\ref{tr2} of the algorithm, as depicted by transition \texttt{T2}. It is a leaf if described by \texttt{(p,fake,1)} in the tree topology, and it is the child number \texttt{n} of some (non-fake) parent process \texttt{f} (\texttt{(f,p,n)} in the tree). It then sends an \texttt{Info} message with its identity to its parent.

\ignore{
Note that we should enforce on transition \texttt{T2} that \texttt{f}
is not \texttt{fake} but that would correspond to a topology with only
one process since it would neither have a child nor a father.
}

\ignore{
\lp{Property: T1 fires for all processes that are not leaves and updates their successor.}

\lp{(for properties) can Pred and Succ be updated more than once? Actually they are updated exactly once. It has to be proven.}
 }

\subsection{Main Phase}
\label{sec:CPN:main}
The CPN in Figure~\ref{fig:tree2ring:main} describes the core part of the algorithm, \ie the processing of \texttt{FC} and \texttt{Info} messages. Transition \texttt{T3} handles an \texttt{FC} message, as rule~\ref{tr3} of the algorithm by updating the predecessor information of the receiver of the message. Rule~\ref{tr4} is decomposed into $3$ transitions corresponding to the different possible receiver configurations:
\begin{description}
    \item [\texttt{T4a}:] relative to receiver \texttt{r}, the sending child has a next sibling \texttt{q} to whom the received information is forwarded as an \texttt{AC} message;
    \item [\texttt{T4b}:] relative to receiver \texttt{r}, the sending child is the last in the list (\ie the youngest sibling) and the receiving node is not the root (\ie it has a parent \texttt{q} which is not the \texttt{fake} one) to which it forwards the \texttt{Info} message;
    \item [\texttt{T4c}:] relative to receiver \texttt{r}, the sending child is the last in the list (\ie the youngest sibling) and the receiver is the root (\ie it has no parent---its parent is the \texttt{fake} one). It updates its predecessor with the information \texttt{I} received and sends a \texttt{BC} message with its own identity to process \texttt{I}.
\end{description}

\ignore{
\lp{Property: T3 is executed by nodes that are not the root but are first child and updates the predecessor information. In the FC message I=p.}

\lp{Property: T4c is executed by the root only to update its predecessor information.}
}

\subsection{Termination Phase}
\label{sec:CPN:term}

Finally, the termination phase, shown in Figure~\ref{fig:tree2ring:term} handles the \texttt{AC} and \texttt{BC} messages, using transitions \texttt{T5} and \texttt{T6} respectively. In case of an \texttt{AC} message, the predecessor information of receiver \texttt{r} is updated with the content \texttt{I} of the message. It also sends a \texttt{BC} message to this process \texttt{I} with its identifier \texttt{r}. When a \texttt{BC} message is handled, only the successor information of the receiver \texttt{r} is updated with the identity \texttt{I} carried by the message.

\ignore{
\lp{mention that I can be different from processes in a parent-child relation.}

\lp{Property: T6 updates successors for leaves only. In the BC message, I and the receiver r are leaf nodes.}

\lp{Property: T5 is executed for all nodes that are neither the root no the first child of a node. It handles AC messages that were sent by a parent executing T4a after receiving information from a previous child.}

\lp{Property: since all cases are exclusive and transitions fired exactly once for each process handled, the algorithm can be run concurrently: whatever the transition firing order, the change of succ and pred is always the same.}
}

\section{A Simplified Coloured Petri Net Model}
\label{sec:CPNsimple}


In this section, we first simplify the CPN model, which then makes it easier to exhibit its invariant properties.

\subsection{The Simplified Model}
\label{sec:CPNsimple:model}

The simplified CPN is given in Figures~\ref{fig:tree2ring:simple:init}, \ref{fig:tree2ring:simple:main}, \ref{fig:tree2ring:simple:term}.
First, both places \texttt{Pred} and \texttt{Succ} initially contained bogus values associated with each process (\ie \texttt{(p,fake)} for a process \texttt{p}), which was then discarded to be replaced by the actual values. In the new net, these two places are initially empty, and the only operation now consists in putting in these places the predecessor and successor values produced.

\ignore{\lp{Property: since we have an invariant stating a process appears only once as 1st element and once as second element in these places, the transitions feeding them can only be fired once per process (simplifying  proofs again!!!)}
}

Second, an {\tt FC} message is only produced by transition \texttt{T1}, and thus has the form \texttt{(FC,p,p,c)}. Thus the information \texttt{I} carried by the message is the identity of the sender process \texttt{p}, \texttt{I}=\texttt{p}. Hence, transition \texttt{T3} (the only one for the reception of {\tt FC} messages) is modified by using \texttt{p} only.

\subsection{Invariants of the simplified CPN}
\label{sec:CPNsimple:invariants}

\ignore{
\cl {Here are the properties from the earlier CPN section.}

\lp{(for properties) Note that the tree topology remains unchanged.}

\lp{Property: T1 fires for all processes that are not leaves and updates their successor.}

\lp{(for properties) can Pred and Succ be updated more than once? Actually they are updated exactly once. It has to be proven.}

\cc{This property can be interesting to prove that once the system has
reached a correct state (\ie a legitimate configuration, \ie once the
ring is established), it stays in this configuration, provided that no
failure happens $\rightarrow$ closure property.}

\lp{Property: T3 is executed by nodes that are not the root but are first child and updates the predecessor information. In the FC message I=p.}

\lp{Property: T4c is executed by the root only to update its predecessor information.}

\lp{Property: T6 updates successors for leaves only. In the BC message, I and the receiver r are leaf nodes.}

\lp{Property: T5 is executed for all nodes that are neither the root no the first child of a node. It handles AC messages that were sent by a parent executing T4a after receiving information from a previous child.}

\lp{Property: since all cases are exclusive and transitions fired exactly once for each process handled, the algorithm can be run concurrently: whatever the transition firing order, the change of succ and pred is always the same.}

\cl {Here are the invariant properties mentioned in an earlier email - now commented out.}
}

\ignore{
\begin{proposition}
${\tt InitP} + ident(FC) + ident(Info) + ident(AC) + dst({\tt Pred}) = Proc$
\end{proposition}
\begin{proof}
We consider the relevant transitions in turn:
\begin{description}
\item[T1]  This consumes an element of {\tt InitP} and produces an FC message with that identity
\item[T2]  This consumes an element of {\tt InitP} and produces an Info message with that identity
\item[T3]  This consumes an FC message and generates a {\tt Pred} entry with the identity as destination
\item[T4a]  This consumes an Info message and produces an AC message with the same identity
\item[T4b]  This consumes an Info message and produces an Info message with the same identity
\item[T4c]  This consumes an Info message and generates a {\tt Pred} entry with the identity as destination.
\item[T5]  This consumes an AC message and generates a {\tt Pred} entry with the identity as destination.
\end{description}
\qed
\end{proof}

\begin{proposition}
${\tt InitP} +   ident(Info) + ident(AC) + dst(BC) + src({\tt Succ}) = Proc$
\end{proposition}
\begin{proof}
We consider the relevant transitions in turn:
\begin{description}
\item[T1]  This consumes an element of {\tt InitP} and adds a {\tt Succ} entry with a matching source
\item[T2]  This consumes an element of {\tt InitP} and produces an Info message with that identity
\item[T4a]  This consumes an Info message and produces an AC message with the same identity
\item[T4b]  This consumes an Info message and produces an Info message with the same identity
\item[T4c]  This consumes an Info message and sends a BC message to the process with the given identity
\item[T5]  This consumes an AC message and sends a BC message to the process with the given identity
\item[T6]  This consumes a BC message and adds a {\tt Succ} entry with source matching the receiver.
\end{description}
\qed
\end{proof}

\begin{verbatim}
1. Relate number of messages to number of {\tt Pred} entries (as set by the algorithm)
  #FC + #Info + #AC + #Pred = n (where n is the number of nodes)

  - this property holds after the initialisation phase
  - it assumes that the entries in {\tt Pred} are only added when determined by the algorithm

2. Relate content of messages to {\tt Pred} entries (as set by the algorithm)
  ident(FC) + ident(Info) + ident(AC) + dst(Pred) = Proc

  - function ident picks out the identity in the relevant message
  - function dst picks out the destination of the {\tt Pred} entry

3. Relate number of messages to number of {\tt Succ} entries (as set by the algorithm)
  #Info + #AC + #BC + #Succ = n

  - this property holds after the initialisation phase
  - it assumes that the entries in {\tt Succ} are only added when determined by the algorithm

4. Relate content of messages to {\tt Succ} entries (as set by the algorithm)
  ident(Info) + ident(AC) + dst(BC) + src(Succ) = Proc

  - function ident picks out the identity in the relevant message
  - function dst picks out the destination of the relevant message
  - function src picks out the source of the {\tt Succ} entry

\end{verbatim}
}
The Petri net model allows us to identify various properties of the system.  Of interest here are the \emph{place invariants}~\cite{Jen92,KJ-EATCS-94}.  These identify weighted sums of tokens which remain invariant under the firing of all transitions.  We use projection functions such as $\pi_2$ to select the second element of a token which has a tuple value, and $\pi_{2,4}$ to select the second and fourth elements, to form a pair.  We also use a function notation to select elements of a particular type, thus $\mathit{Messages}({\tt FC})$ is the set of \texttt{FC} messages in place \texttt{Messages}.

\ignore{
\cl {I think the function notation is more readable than restriction, \ie $Messages|_{FC}$ which would probably require definition anyway.}
}

It is possible that some of these invariant properties could be extracted automatically from the model by a suitable tool, while others could at least be checked automatically.  These properties may then be of assistance in proving more involved results for our system.

When verifying properties of a modelled system, it is important to \emph{validate} the model, \ie show that it is an accurate model.  
In this regard, we note that the CPN model does not introduce any new information, but it does make explicit certain values (like 
the sender and receiver of a message) which then make it easier to prove the invariant properties.  Another important issue for 
validating a distributed algorithm is that the model does not have one process inadvertently accessing information that is local
to another process.  In the case of our model, we need to ensure that each firing of a transition is relevant to only one process
and does not access information local to another process.  We note the following properties of each transition:
\begin{description}
\item[T1] fires for process $p$ and accesses its first child and generates its \emph{Succ} entry.
\item[T2] fires for process $p$ and accesses its parent.
\item[T3] fires for process $r$ and generates its \emph{Pred} entry.
\item[T4a] fires for process $r$ and accesses its children.
\item[T4b] fires for process $r$ and accesses its parent and its children.
\item[T4c] fires for process $r$ and checks that it is the root and generates its \emph{Pred} entry based on the received message.
\item[T5] fires for process $r$ and generates its \emph{Pred} entry based on the received message.
\item[T6] fires for process $r$ and generates its \emph{Succ} entry based on the received message.
\end{description}
Having convinced ourselves that the model accurately captures a distributed system, we now consider the properties of the model.

\ignore{
\cl {The following two properties have only been partially proved.  The rest follows from Lemma~\ref{thm:predecessor}.}

Here are the properties which we identify:
\begin{enumerate}
\item  $\pi_1({\tt Succ}) = Proc \setminus \{fake\} = \pi_2({\tt Succ})$ --- in the terminal state
\item  $\pi_1({\tt Pred}) = Proc \setminus \{fake\} = \pi_2({\tt Pred})$ --- in the terminal state
\item  A leaf process is either in {\tt InitP} or in $\pi_1({\tt Succ})$
\end{enumerate}
}

\begin{invariant}
${\tt InitP} + \pi_1({\tt Succ}) + \pi_2(\mathit{Messages}({\tt Info})) + \pi_2(\mathit{Messages}({\tt AC})) +$\\$\pi_4(\mathit{Messages}({\tt BC})) = {\tt Proc} \setminus \{ {\tt fake} \}$
\end{invariant}
\begin{proof}
Initially, we have no messages and ${\tt InitP} =  {\tt Proc} \setminus \{ {\tt fake} \}$.  Then, we can consider each transition in turn:
\begin{description}
\item[T1]  The initialisation of a parent removes an item from {\tt InitP} and adds a {\tt Succ} entry with the same identity.
\item[T2]  The initialisation of a leaf removes an item from {\tt InitP} and adds an {\tt Info} message with the relevant identity.
\item[T4a]  This consumes an {\tt Info} message and generates an {\tt AC} message with the same identity.
\item[T4b]  This consumes one {\tt Info} message and generates another with the same identity.
\item[T4c]  This consumes an {\tt Info} message and generates a matching {\tt BC} message.
\item[T5]  This consumes an {\tt AC} message and generates a matching {\tt BC} message for the destination given by the identity.
\item[T6]  This consumes a {\tt BC} message and adds a {\tt Succ} entry for the receiver.
\qed
\end{description}
\end{proof}

\ignore{
\cl {The following proposition is included in proposition~\ref{thm:succ-pred}.}
\begin{invariant}
$\pi_2({\tt Pred}) + \pi_2(\mathit{Messages}({\tt FC})) = \pi_1({\tt Succ}) + \pi_4(\mathit{Messages}({\tt BC}))$
\end{invariant}

\cl {The following proposition is included in proposition~\ref{thm:succ-pred}.}
\begin{invariant}
$\pi_2({\tt Succ}) + \pi_2(Message {\tt BC}) = \pi_4(Message {\tt FC}) + \pi_1({\tt Pred})$
\end{invariant}

\cl {The following proposition includes the preceding two}
}
\begin{invariant}
\label{thm:succ-pred}
${\tt Succ} + \pi_{4,2}(\mathit{Messages}({\tt BC})) = \pi_{3,4}(\mathit{Messages}({\tt FC})) + \pi_{2,1}({\tt Pred})$
\end{invariant}
\begin{proof}
  Initially, there are no messages and places {\tt Succ} and {\tt Pred} are empty.  Subsequently, we consider the relevant transitions in turn:
\begin{description}
\item[T1]  The setting of {\tt Succ} is paired with the generation of an {\tt FC} message.
\item[T3]  The consumption of an {\tt FC} message is paired with the addition of a {\tt Pred} entry.
\item[T4c]  The setting of {\tt Pred} is paired with the generation of a {\tt BC} message.
\item[T5]  The setting of {\tt Pred} is paired with the generation of a {\tt BC} message.
\item[T6]  The consumption of a {\tt BC} message is paired with the addition of a {\tt Succ} entry.
\end{description}
\qed
\end{proof}

\subsection{Liveness of the simplified CPN}
\label{sec:CPNsimple:liveness}

We now summarise some liveness properties of the CPN.
\begin{invariant}
\label{thm:generation}
For any tree with at least two nodes, either transition {\tt T1} or transition {\tt T2} can fire for every node.  Thus, {\tt InitP} will eventually be empty.
\end{invariant}
\begin{proof}
A tree with at least two nodes will have a root and at least one leaf.  Thus:
\begin{enumerate}
\item  {\tt T1} can fire for every node which is \emph{not} a leaf, \ie for every node which has a non-fake child.
\item  {\tt T2} can fire for every leaf, \ie for every node which has a non-fake parent.  
\qed
\end{enumerate}
\end{proof}

\begin{invariant}
\label{thm:consumption}
All messages can eventually be uniquely consumed.
\end{invariant}
\begin{proof}
We consider the different kinds of messages in turn:
\begin{description}
\item[FC]  The only constraint on the consumption of {\tt FC} messages by transition {\tt T3} is that the identity and the source of the message are the same.  This is guaranteed by the generation of {\tt FC} messages in transition {\tt T1}.
\item[Info]  Every {\tt Info} message can be consumed by one of the transitions {\tt T4a}, {\tt T4b} or {\tt T4c}.  Transition {\tt T4a} can consume {\tt Info} messages from node {\tt p} for parent {\tt r} provided {\tt p} has a younger sibling.  Transition {\tt T4b} can consume {\tt Info} messages from node {\tt p} for parent {\tt r} provided {\tt p} has no younger sibling and {\tt r} has a (non-fake) parent.  Transition {\tt T4c} can consume {\tt Info} messages from node {\tt p} for parent {\tt r} provided {\tt p} has no younger sibling and {\tt r} is the root.
\item[AC]  There is no constraint on the consumption of {\tt AC} messages by transition {\tt T5}.
\item[BC]  There is no constraint on the consumption of {\tt BC} messages by transition {\tt T6}.
\end{description}
Note that in each case, exactly one transition can consume each kind of message.\qed
\end{proof}

Property \ref{thm:consumption} guarantees that the algorithm is
\emph{silent}. Once a legitimate configuration has been 
reached (\ie once the ring has been established), there is no pending
message held in the communication channels of the system. A silent
algorithm is an algorithm in which, upon a certain point of its
execution, the contents of the communication channels remain the
same~\cite{DGS99}. In our case, no message is sent between processes
once the system has stabilised, as mentioned in the introduction of
this paper. 

\section{Algorithm Termination}
\label{sec:termination}

\begin{definition}
We define the weight of the state as follows:
\begin{itemize}
\item  for each node prior to sending its first message: \\
weight(node) = 3 + depth(node)
\item  for each node after sending its first message:
weight(node) = 0
\item  for each {\tt FC} message: weight({\tt FC}) = 1
\item  for each {\tt BC} message: weight({\tt BC}) = 1
\item  for each {\tt AC} message: weight({\tt AC}) = 2
\item  for each {\tt Info} message: weight({\tt Info}) = 3 + depth(target)
\end{itemize}
Then the total weight of the state is given by:
$\mathit{Weight} = \Sigma_{x \in \mathit{node} \cup \mathit{msg}} \mathit{weight}(\mathit{x})$.
\end{definition}

Note that the weight of a state is always positive if there are any nodes 
yet to send their first message or any
messages to deliver, or else zero when there are none. 
As a consequence, the $\mathit{weight}$ function has separate points
and positivity properties. Absolute homogeneity and triangle
inequality properties are not relevant in our context. Therefore, the
$\mathit{weight}$ function is a \emph{norm} on the states (as introduced in 
Section~\ref{sec:intro}). 

\begin{proposition}
\label{thm:termination}
Given the state of the algorithm, the weight of the state decreases at every step.
\end{proposition}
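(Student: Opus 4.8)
The proof proceeds by a case analysis over the transitions of the simplified CPN (T1, T2, T3, T4a, T4b, T4c, T5, T6), showing in each case that firing the transition strictly decreases $\mathit{Weight}$. The key bookkeeping observation is that each transition either (i) consumes a node that has not yet sent its first message and replaces its weight $3 + \mathit{depth}(node)$ by the weight of the message(s) it emits, or (ii) consumes a message and replaces its weight by the weight of the message(s) it emits plus possibly a \texttt{Pred}/\texttt{Succ} entry (which carries weight $0$). So in every case the change in weight is a purely local arithmetic comparison, and I would tabulate it transition by transition.

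Concretely: for \textbf{T1}, a non-leaf node of depth $d$ (weight $3+d$) is consumed and one \texttt{FC} message (weight $1$) plus a \texttt{Succ} entry (weight $0$) are produced, so the weight drops by $2+d > 0$. For \textbf{T2}, a leaf node of depth $d$ (weight $3+d$) is consumed and one \texttt{Info} message whose target is the parent, hence of depth $d-1$, so weight $3+(d-1) = 2+d$, is produced; the weight drops by $1 > 0$. For \textbf{T3}, an \texttt{FC} message (weight $1$) is consumed and a \texttt{Pred} entry (weight $0$) produced: drop of $1$. For \textbf{T4a}, an \texttt{Info} message with target $r$ of some depth $d$ (weight $3+d$) is consumed and an \texttt{AC} message (weight $2$) produced: drop of $1+d > 0$. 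For \textbf{T4b}, an \texttt{Info} message with target $r$ of depth $d$ (weight $3+d$) is consumed and an \texttt{Info} message with target $\mathtt{parent}(r)$ of depth $d-1$ (weight $2+d$) produced: drop of $1$. For \textbf{T4c}, an \texttt{Info} message targeting the root (depth $0$, weight $3$) is consumed and one \texttt{BC} message (weight $1$) plus a \texttt{Pred} entry (weight $0$) produced: drop of $2$. For \textbf{T5}, an \texttt{AC} message (weight $2$) is consumed and one \texttt{BC} message (weight $1$) plus a \texttt{Pred} entry produced: drop of $1$. For \textbf{T6}, a \texttt{BC} message (weight $1$) is consumed and a \texttt{Succ} entry (weight $0$) produced: drop of $1$.

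The one subtlety requiring care — and the main obstacle — is the weights that depend on $\mathit{depth}$: I must argue that ``$\mathit{depth}(\mathit{target})$'' for an \texttt{Info} message is well-defined and that it genuinely decreases by exactly one when an \texttt{Info} message is forwarded up the tree (transitions T2 and T4b), and reaches $0$ precisely at the root (T4c). This relies on the fact, guaranteed by the \texttt{TreeTopology} place being an immutable encoding of a genuine tree and by the structure of the arc inscriptions in T4b (which forces $r$ to have a non-fake parent) and T4c (which forces $r$'s parent to be \texttt{fake}), that the chain of targets of a propagating \texttt{Info} message is a strictly ascending path in the tree. Since a non-leaf node weight $3+d$ strictly exceeds the weight $3+(d-1)=2+d$ of the \texttt{Info} message it would spawn toward its parent, and each forwarding step shaves off one more unit, the depth-dependent terms never cause a non-decrease. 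I would also note that T4a/T5 never re-introduce depth-dependence (the \texttt{AC} weight $2$ and \texttt{BC} weight $1$ are constants), so the only place depths enter is the bound $3+d$ on node weights and the matching \texttt{Info} weights, and the telescoping is clean.

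Having checked all eight transitions, every firing strictly decreases $\mathit{Weight}$, which proves the proposition; combined with the positivity/integrality of $\mathit{Weight}$ already observed, this yields termination of the algorithm. $\qed$
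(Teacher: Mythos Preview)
Your proposal is correct and follows essentially the same approach as the paper: a case-by-case analysis over the eight transitions/rules, computing the local change in $\mathit{Weight}$ and verifying it is strictly negative in each case, with the same arithmetic conclusions (drop of $2+d$ for T1, $1$ for T2, $1$ for T3, at least $1$ for T4a, $1$ for T4b, $2$ for T4c, $1$ for T5, $1$ for T6). Your additional remarks on why $\mathit{depth}(\mathit{target})$ is well-defined and strictly decreases along the \texttt{Info} chain are a welcome clarification that the paper leaves implicit.
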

\begin{proof}
We consider each possible rule in turn:
\begin{description}
\item [rule 1]:
The weight of the node is set to zero and the number of {\tt FC} messages is increased by 1.  Hence $\mathit{Weight}$ is decreased by $2 + \mathit{depth}(\mathit{node})$.
\item [rule 2]:
The weight of the node is set to zero and an {\tt Info} message is generated for the parent.  Hence $\mathit{Weight}$ is decreased by 1.
\item [rule 3]:
The number of {\tt FC} messages is decreased by 1, and hence $\mathit{Weight}$ decreases by 1.
\item [rule 6]:
The number of {\tt BC} messages is decreased by 1, and hence $\mathit{Weight}$ decreases by 1.
\item [rule 5]:
The number of {\tt AC} messages is decreased by 1 and the number of {\tt BC} messages is increased by 1.  
Hence $\mathit{Weight}$ decreases by 1.
\item [rule 4a]:
The {\tt Info} message is received by a parent with other children, and an {\tt AC} message is sent to the next sibling.  Consequently, $\mathit{Weight}$ is decreased by at least 3, and increased by 2, \ie it is decreased by at least 1.
\item [rule 4b]:
The {\tt Info} message is passed to the parent, and hence the depth of the target is decreased by 1.  Hence $Weight$ is decreased by 1.
\item [rule 4c]:
The {\tt Info} message received by the root node (with depth 0) is replaced by a {\tt BC} message.  Hence $\mathit{Weight}$ is decreased by at 3 and increased by 1, \ie it is decreased by 2.
\end{description}
\qed
\end{proof}

\ignore{\cl{Note that the above proof uses the depth of a node in the tree which is not computed.  Hence, model checking will not be immediately applicable.}
}

\begin{invariant}
\label{thm:stab}
The algorithm terminates and is self-stabilising.
\end{invariant}

\begin{proof}
Following initialisation, every execution step of the algorithm
involves at least one of the above rules. Thus, $\mathit{Weight}$ is
\emph{strictly} monotonic, \ie it is decreased by at least one while
remaining positive.  Consequently, the algorithm terminates. Moreover,
as stated in section~\ref{sec:intro}, if the norm
function $\mathit{Weight}$ is strictly monotonic, then the algorithm
is self-stabilising (proof by norm).
\qed
\end{proof}

\section{Algorithm Correctness}
\label{sec:correctness}

\begin{figure*}[!!htb]
\begin{center}
\subfigure[\label{fig:newCPN:init}Initialisation phase]{
  \scalebox{.6}{
    \begin{tikzpicture}[->,>=stealth,node distance=3cm,inner sep=1.5pt,auto]
        \node [place,label=below:InitP,label=left:{\it Proc.all$\setminus\{$fake$\}$}] (allP) {};
        \node [place,label=above:Succ,above of=allP] (succ) {};
        \node [transition,label=below:T1,label=above:{[c$\neq$fake]},right of=succ] (T1) {}
                edge [pre] node [swap] {p} (allP)
                edge [post] node [swap] {(p,c)} (succ);
        \node [transition,label=above:T2,right of=allP] (T2) {}
                edge [pre] node [swap] {p} (allP);
         \node [place,label=below:TreeTopology,right of=T2,label=right:{\it Tree}] (topo) {}
                edge [<->] node [swap,near start] {(p,c,1)} (T1)
                edge [<->] node {(p,fake,1)+(f,p,n)} (T2);
        \node [place,label=above:Messages,right of=T1] (mess) {}
                edge [pre,red] node [near start,red] {(Info,p,p,f)} (T2);
    \end{tikzpicture}
  }
}
\subfigure[\label{fig:newCPN:term}Termination phase]{
  \scalebox{.6}{
    \begin{tikzpicture}[->,>=stealth,node distance=3cm,inner sep=1.5pt,auto]
        \node [place,label=above:Succ] (succ) {};
        \node [transition,label=above:T6,right of=succ] (T6) {}
                edge [post] node [swap] {(r,I)} (succ);
        \node [transition,label=above:T5,below of=T6] (T5) {};
        \node [place,label=below:Messages,right of=T5] (mess) {}
                edge [pre,bend right,ForestGreen] node [swap,ForestGreen] {(BC,r,r,I)} (T5)
                edge [post,bend left,blue] node [swap,blue] {(AC,I,p,r)} (T5)
                edge [post,ForestGreen] node [swap,near end,ForestGreen] {(BC,I,p,r)} (T6);
    \end{tikzpicture}
  }
} 
\\
\subfigure[\label{fig:newCPN:main}Main phase]{
  \scalebox{.6}{
    \begin{tikzpicture}[->,>=stealth,node distance=3cm,inner sep=1.5pt,auto]
        \node [transition,label=right:T4c] (T4c) {};
        \node [place,label=above:Messages,above right of=T4c] (mess) {}
                edge [pre,bend right=20,ForestGreen] node [swap,ForestGreen] {(BC,r,r,I)} (T4c)
                edge [post,bend left=20,red] node [swap,red,xshift=1cm] {(Info,I,p,r)} (T4c);
        \node [place,label=below:TreeTopology,below right of=T4c] (topo) {}
                edge [<->] node {(r,p,n)+(fake,r,1)+(r,fake,n+1)} (T4c);
        \node [transition,label=right:T4a,below right of=mess,label=left:{[q$\neq$fake]}] (T4a) {}
                edge [<->] node [swap,xshift=1cm] {(r,p,n)+(r,q,n+1)} (topo)
                edge [pre,bend right=20,red] node [swap,at start,red] {(Info,I,p,r)} (mess)
                edge [post,bend left=20,blue] node [xshift=1cm,blue] {(AC,I,r,q)} (mess);
        \node [transition,label=right:T4b,label=left:{[q$\neq$fake]},right of=T4a] (T4b) {}
                edge [<->,bend left=10] node {(q,r,m)+(r,p,n)+(r,fake,n+1)} (topo)
                edge [post,bend right=10,red] node [swap,red] {(Info,I,r,q)} (mess);
        \draw [pre,red] (T4b) |- node [swap,red] {(Info,I,p,r)} (mess);
    \end{tikzpicture}
}
}
\caption{Successor-based CPN model}
\end{center}
\end{figure*}

\begin{proposition}
\label{thm:mirroring}
The algorithm establishes {\tt Succ} and {\tt Pred} as mirror images, \ie{} {\tt Succ}=$\pi_{1,2}$(\tt{Pred}).
\end{proposition}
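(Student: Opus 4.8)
The plan is to obtain this directly from the place invariant of Invariant~\ref{thm:succ-pred} together with the silentness recorded in Property~\ref{thm:consumption}. Invariant~\ref{thm:succ-pred} states, as a multiset identity holding in every reachable configuration,
\[
{\tt Succ} + \pi_{4,2}(\mathit{Messages}({\tt BC})) = \pi_{3,4}(\mathit{Messages}({\tt FC})) + \pi_{2,1}({\tt Pred}),
\]
so the whole argument reduces to showing that in the final configuration the two message terms vanish, leaving ${\tt Succ} = \pi_{2,1}({\tt Pred})$, \ie{} $(a,b)\in{\tt Succ}$ exactly when $(b,a)\in{\tt Pred}$, which is the mirror-image property (reading the projection applied to {\tt Pred} as the relational inverse).

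First I would argue that a terminal configuration is message-free. By Invariant~\ref{thm:stab} the algorithm terminates, so such a configuration is reached. By Property~\ref{thm:consumption}, each message kind has exactly one transition able to consume it, and the only enabling requirement is the presence of the message --- for {\tt FC} messages also the coincidence of identity and source, which Property~\ref{thm:consumption} notes is guaranteed by their generation in {\tt T1}. Hence if any {\tt FC}, {\tt Info}, {\tt AC} or {\tt BC} message were present, transition {\tt T3}, one of {\tt T4a}/{\tt T4b}/{\tt T4c}, {\tt T5} or {\tt T6} respectively would be enabled, contradicting terminality. Therefore $\mathit{Messages}=\emptyset$ in the terminal configuration, and in particular $\mathit{Messages}({\tt FC})=\emptyset=\mathit{Messages}({\tt BC})$. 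Substituting into Invariant~\ref{thm:succ-pred} yields ${\tt Succ}=\pi_{2,1}({\tt Pred})$. If one also wants the statement for the successor-based model of Figures~\ref{fig:newCPN:init}--\ref{fig:newCPN:main}, one observes that {\tt Pred} and the transitions {\tt T3}, {\tt T5}, {\tt T4c} feeding it never affect any token in {\tt Succ}, {\tt Messages} or {\tt TreeTopology}; hence deleting them leaves the {\tt Succ}-part of every reachable configuration unchanged, so the {\tt Succ} computed there coincides with the mirror of the {\tt Pred} of the original model.

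I expect the main obstacle to be that first step: making precise that ``every message can eventually be consumed'' (Property~\ref{thm:consumption}) forces a terminal configuration to contain no messages. This rests on the observation that message consumption is never blocked by side conditions on the tree topology or on {\tt Pred}/{\tt Succ} --- only the {\tt FC} identity/source coincidence intervenes, and that is an invariant of generation. Once the terminal configuration is known to be message-free, the conclusion is an immediate instantiation of Invariant~\ref{thm:succ-pred}, requiring no further calculation.
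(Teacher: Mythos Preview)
Your proposal is correct and follows exactly the paper's approach: the paper's proof consists of the single sentence ``This follows directly from properties~\ref{thm:succ-pred} and~\ref{thm:consumption},'' and you have simply spelled out why --- termination plus message-consumability force the terminal configuration to be message-free, whereupon Invariant~\ref{thm:succ-pred} collapses to ${\tt Succ}=\pi_{2,1}({\tt Pred})$. Your explicit appeal to Invariant~\ref{thm:stab} for the existence of a terminal configuration, and your remark that the stated projection should read $\pi_{2,1}$ rather than $\pi_{1,2}$, are both appropriate clarifications.
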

\begin{proof}
This follows directly from properties \ref{thm:succ-pred} and \ref{thm:consumption}. 
\ignore{
\cl {The original proof has been commented out.}
}
\ignore{
We consider the rules in turn which set these values:
\begin{description}
\item [rules 1, 3]:
Every node (which is not a leaf) generates an {\tt FC} message to its oldest child, and sets {\tt Succ} to point to that child.  When the child receives the message, it sets {\tt Pred} to point to the parent.
\item [rules 5, 6]:
A node which receives an {\tt AC} message, sets {\tt Pred} to the identity given in that message and sends a {\tt BC} message to that identity which then sets {\tt Succ} to point back to the originator of the {\tt BC} message.
\item [rules 4c, 6]:
The root sets {\tt Pred} to the identity in the {\tt Info} message and then sends a {\tt BC} message to that identity, which then sets {\tt Succ} to be the root.
\end{description}
Thus for each possible sequence of rules, {\tt Succ} and {\tt Pred} are set to be mirror images.  Consequently, we need only show that one of these sets of values is set appropriately.
}
\qed
\end{proof}

\begin{proposition}
\label{thm:predecessor}
The algorithm establishes predecessors of nodes as:
\begin{itemize}
\ignore{\item  successor(node) = oldest-child of node (if number of children > 0) \\
= next-relative of node (if number of children = 0, \ie a leaf)
}
\item  predecessor(node) = parent of node (case~1: node is the oldest child)
\item  predecessor(node) = preceding-leaf of node (case~2: node is not the oldest child and not the root)
\item  predecessor(node) = last-leaf (case~3: node is the root)
\end{itemize}
\end{proposition}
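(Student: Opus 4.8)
The plan is to characterise the terminal marking of the net. By Property~\ref{thm:stab} the algorithm terminates, and by Property~\ref{thm:consumption} every message is eventually consumed, so in the terminal marking place \texttt{Messages} is empty; it then remains to identify the \texttt{Pred} entries produced along the way. The \texttt{TreeTopology} place is never modified (all its arcs are bidirectional), so ``parent'', ``child'', ``subtree'', ``oldest child'' and ``rightmost leaf'' are static notions throughout the run, and since \texttt{Pred} is written only by \texttt{T3}, \texttt{T4c} and \texttt{T5}, it suffices to determine, for each node, which of these fires for it and what the message it consumes carries.

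The technical core is a message-content invariant, established by induction over reachable markings --- it holds vacuously in the initial marking (no messages) and I would verify it is preserved by every transition. First, every \texttt{FC} message has the form $(\texttt{FC},p,p,c)$ with $p=\texttt{parent}(c)$ and $c$ the oldest child of $p$, since \texttt{T1} is its only producer and fires on a topology triple $(p,c,1)$. Second, every \texttt{Info} message $(\texttt{Info},I,s,d)$ satisfies $d=\texttt{parent}(s)$ and $I$ is the \emph{rightmost leaf of the subtree rooted at} $s$: the base case is \texttt{T2}, where a leaf sends its own identity to its parent; in the inductive step \texttt{T4b} fires only when the sending child $s$ is the \emph{last} child of the receiver $r$, so the rightmost leaf of $r$'s subtree equals that of $s$'s subtree, and the forwarded message $(\texttt{Info},I,r,\texttt{parent}(r))$ again satisfies the invariant. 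Third, every \texttt{AC} message is produced by \texttt{T4a} out of an \texttt{Info} message $(\texttt{Info},I,p,r)$, so it has the form $(\texttt{AC},I,r,q)$ where $q$ is the sibling immediately following $p$ in $r$'s child list (hence $q$ is not the oldest child of $r$) and, by the previous point, $I$ is the rightmost leaf of the subtree rooted at $p$.

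With these invariants the three cases of the statement follow by inspecting the transitions that write \texttt{Pred}. If a node $v$ is the oldest child of some node $u$, then exactly one \texttt{FC} message reaches it --- the one generated by $u$ via \texttt{T1} --- and \texttt{T3} sets $\texttt{pred}(v)=u=\texttt{parent}(v)$ (case~1). If $v$ is the root, the \texttt{Info} it receives from its last child carries, by the invariant and since a node and its last child have the same rightmost leaf, the rightmost leaf of the whole tree, and \texttt{T4c} sets $\texttt{pred}(v)$ to that last leaf (case~3). Otherwise $v$ is child number $i\ge 2$ of its parent $u$ and is not the root; the \texttt{Info} originating at the rightmost leaf of the subtree rooted at $u$'s $(i-1)$-th child travels up to $u$ (each node on that path being the last child of its own parent) and triggers \texttt{T4a}, which sends an \texttt{AC} message to $v$; \texttt{T5} then sets $\texttt{pred}(v)$ to the rightmost leaf of the subtree rooted at sibling $i-1$, i.e. the preceding leaf of $v$ (case~2). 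These three situations are mutually exclusive and exhaustive (for a tree with at least two nodes), each node acquires exactly one \texttt{Pred} entry, so the terminal \texttt{Pred} is exactly as claimed, and by Proposition~\ref{thm:mirroring} the successor pointers are obtained as its mirror image.

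I expect the \texttt{Info}-message invariant to be the main obstacle: it interleaves the induction on the firing sequence with the structural fact that a node and its last child have the same rightmost leaf, and one must check that the guards separating \texttt{T4a}, \texttt{T4b} and \texttt{T4c} exactly encode the three cases ``the sending child has a younger sibling'', ``it is the last child and the receiver has a parent'', and ``it is the last child and the receiver is the root''. A secondary point requiring care is that \emph{every} node actually receives its determining message before termination --- i.e. that the relevant \texttt{Info} propagates all the way up --- which rests on Properties~\ref{thm:generation} and \ref{thm:consumption}; the single-node tree must be excluded, as already done in Property~\ref{thm:generation}.
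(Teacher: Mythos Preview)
Your proof is correct and reaches the same three-way case split as the paper, but the route is different enough to be worth noting. The paper argues operationally: for each of the three kinds of node it exhibits the firing sequence that sets its \texttt{Pred} entry (\texttt{T1T3} for oldest children, \texttt{T2T4b}$^{*}$\texttt{T4aT5} for non-oldest non-root nodes, \texttt{T2T4b}$^{*}$\texttt{T4c} for the root), and relies on the reader to see that the identity carried by the \texttt{Info} message is the originating leaf. You instead prove a message-content invariant by induction over reachable markings---in particular the key fact that every \texttt{Info} message $(\texttt{Info},I,s,d)$ has $I$ equal to the rightmost leaf of the subtree rooted at $s$---and then simply read off the \texttt{Pred} value from whichever of \texttt{T3}, \texttt{T4c}, \texttt{T5} fires for the node in question. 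Your approach is more work up front but yields a cleaner argument: the invariant makes explicit \emph{why} the identity reaching the root is the last leaf (a node and its last child share the same rightmost leaf), whereas the paper's firing-sequence description leaves that structural fact implicit. The paper's version is shorter and more intuitive for a reader already familiar with the algorithm; yours would be easier to mechanise and is less vulnerable to overlooked interleavings.
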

\begin{proof}
We consider each possible case in turn. They correspond to the coloured arcs in Figure~\ref{fig:algo:end} (red, green and blue arcs respectively).
\begin{description}
\item [case~1---firing sequence T1T3]: 
Every non-leaf node generates an {\tt FC} message to its oldest child (T1), which sets the parent to be its predecessor (T3), as required.
\item [case~2---firing sequence T2T4b*T4aT5[T6\textbf{]}]:
Every leaf generates an {\tt Info} message (T2) which is passed up the tree (T4b) till it finds a sibling.  That sibling is sent an {\tt AC} message (T4a) with the identity of the leaf (from the {\tt Info} message).  The {\tt AC} message sets the predecessor of the sibling to be the originating leaf (T5), which is the preceding leaf in the tree.
\item [case~3---firing sequence T2T4b*T4c[T6\textbf{]}]:
Every leaf generates an {\tt Info} message (T2) which is passed up the tree (T4b) till it reaches the root, which will be the case for the last leaf.  In this case, the predecessor of the root is set to the last leaf (T4c).
\end{description}
Thus for each possible firing sequence, the {\tt Pred} values are set as required, and Proposition~\ref{thm:mirroring} tells us that the {\tt Succ} values are also set as required.
\qed
\end{proof}

\begin{proposition}
\label{thm:connectedness}
The algorithm sets the {\tt Succ} and {\tt Pred} values so there is one connected component.
\end{proposition}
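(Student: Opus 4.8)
The plan is to exploit Proposition~\ref{thm:mirroring}, which states that {\tt Succ} is the functional inverse of {\tt Pred}, together with the explicit description of predecessors from Proposition~\ref{thm:predecessor}. Since that case analysis assigns to every non-fake node a unique predecessor, and a short sub-argument shows that every node is named as a predecessor of exactly one node, the relation {\tt Pred} --- and hence {\tt Succ} --- is a bijection on ${\tt Proc}\setminus\{{\tt fake}\}$ in the terminal state. Consequently the functional digraph of {\tt Succ} is a disjoint union of directed cycles, and, because the mirroring of Proposition~\ref{thm:mirroring} makes the undirected edge set exactly $\{\{v,{\tt Succ}(v)\}\}$, the statement "one connected component" is equivalent to: the digraph of {\tt Succ} is a \emph{single} cycle, \ie the orbit of any one node under {\tt Succ} is all of ${\tt Proc}\setminus\{{\tt fake}\}$.

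First I would record the shape of {\tt Succ} obtained by inverting the three cases of Proposition~\ref{thm:predecessor}: ${\tt Succ}(v)$ is the oldest child of $v$ when $v$ has children; when $v$ is a leaf, ${\tt Succ}(v)$ is the next sibling of the shallowest ancestor $a$ of $v$ for which $v$ is the rightmost leaf of the subtree rooted at $a$; and ${\tt Succ}(v)$ is the root when $v$ is the last leaf of the whole tree (equivalently, when no such $a$ with a next sibling exists). A small lemma is needed here to justify the inversion: walking up from a leaf $\ell$ to that shallowest ancestor $a$ shows that $\ell$ is the "preceding leaf" of exactly one node (the next sibling of $a$), or else is the last leaf; this both makes {\tt Succ} well defined on leaves and supplies the injectivity of {\tt Pred} claimed above.

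Then I would prove, by structural induction on subtrees, the following claim: for every subtree $S$ with root $v$, iterating {\tt Succ} from $v$ visits every node of $S$ exactly once, in depth-first preorder, terminates at the rightmost leaf $\ell_S$ of $S$, and every intermediate application of {\tt Succ} stays inside $S$. The base case $S=\{v\}$ (a leaf) is immediate. For the inductive step, with children $c_1,\dots,c_m$ of $v$: ${\tt Succ}(v)=c_1$; by the induction hypothesis the walk traverses the subtree of $c_1$ and ends at its rightmost leaf, whose {\tt Succ} is $c_2$ precisely because $c_1$ is not the last child when $m\ge 2$; iterating, the walk concatenates the DFS preorders of the subtrees of $c_1,\dots,c_m$ and ends at the rightmost leaf of the subtree of $c_m$, which is $\ell_S$. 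Applying this claim with $S$ equal to the whole tree yields a walk from the root that covers every node and ends at the last leaf, whose successor is the root by the third case of Proposition~\ref{thm:predecessor}; this closes the cycle, so the orbit of the root is all of ${\tt Proc}\setminus\{{\tt fake}\}$ and there is exactly one connected component.

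The main obstacle I anticipate is the bookkeeping in the inductive step: making precise and consistent the "preceding leaf / next relative" correspondence so that, when the walk exits a child's subtree at its rightmost leaf, {\tt Succ} lands on exactly the next sibling (or, at the top level, on the root), with no node visited twice and none skipped. Once that correspondence is pinned down --- essentially the little well-definedness lemma above --- the rest follows routinely from Propositions~\ref{thm:mirroring} and~\ref{thm:predecessor}.
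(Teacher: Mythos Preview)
Your argument is correct, but it takes a genuinely different route from the paper's own proof. The paper argues \emph{dynamically}: it imagines each node as still attached to the tree until its {\tt Succ}/{\tt Pred} values are set, and then checks, case by case (oldest child $\leftrightarrow$ parent; younger sibling $\leftrightarrow$ preceding leaf; root $\leftrightarrow$ last leaf), that replacing a tree edge by the new ring edge never disconnects the structure. It is short and intuitive, but correspondingly informal --- the ``does not jeopardise connectivity'' step is asserted rather than proved.

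Your approach is \emph{static} and structural: you take the terminal {\tt Succ} function (obtained by inverting Proposition~\ref{thm:predecessor} via Proposition~\ref{thm:mirroring}), observe that it must be a permutation, and then prove by induction on subtrees that the orbit of the root under {\tt Succ} is the full depth-first preorder, closing up at the last leaf. This is more work --- the well-definedness lemma and the inductive bookkeeping you flag are real --- but it yields strictly more: you get not only connectedness but the exact cyclic order, and the argument is self-contained rather than relying on an intuitive picture of edges being swapped. Either approach suffices for the proposition as stated; yours would also subsume the subsequent ``ring topology'' property in one stroke.
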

\begin{proof} 
For the purposes of the proof, we consider that a node is connected to the tree until its {\tt Succ} and {\tt Pred} values are set.  In this way, the algorithm starts with a tree, which is a single connected component.  Then we need to show that every time the {\tt Succ} and {\tt Pred} values are changed, then the node is still connected to the one component.  Thus, we have:
\begin{itemize}
\item Every oldest child is connected to the parent (by the {\tt Succ} and {\tt Pred} values), which reflects the tree structure, and therefore does not modify the connectedness.
\item Every younger sibling is connected to the preceding leaf.  Since a leaf has no child, connecting it to another node does not jeopardise the connectivity of the structure.  In other words, provided the leaf is connected to the rest of the component, then so is the younger sibling.  
\item The above items result in a connected structure with the last leaf without a successor.  This last leaf is connected to the root.
\end{itemize}
Thus the algorithm sets the {\tt Succ} and {\tt Pred} values to be a single connected component.\qed
\end{proof}

\begin{invariant}
The algorithm produces a ring topology.
\end{invariant}
\begin{proof}
It suffices to have a single connected component where all nodes have only one predecessor and one successor. The connectedness is stated by Proposition~\ref{thm:connectedness}. In the terminal state, there is no message left and {\tt InitP} is empty. Thus, from Property~\ref{thm:termination}, we deduce $\pi_1$({\tt Succ})= {\tt Proc} $\setminus$ {\{{\tt fake}\}}. Similar to the cases in the proof of Proposition~\ref{thm:connectedness}, the different possibilities entail that $\pi_1$({\tt Pred})={\tt Proc}$\setminus$\{\tt fake\}.
\qed
\end{proof}

\section{Slight Simplifications of the Algorithm}
\label{sec:Algosimple}

From the properties proved in the previous sections, we can again simplify the model and reflect these simplifications in the algorithm.

\subsection{New Models}
\label{sec:newCPN}

The topology obtained by the predecessor and successor information is a ring, and these are just mirror images of one another. It is thus not necessary to keep them both. So, let us remove place \texttt{Pred} from the model of Figures~\ref{fig:tree2ring:simple:main} and \ref{fig:tree2ring:simple:term}.

In the resulting net, transition \texttt{T3} only discards \texttt{FC} messages. Since no other transition handles these messages, they are also unnecessary. Therefore, we remove the arc producing \texttt{FC} messages from the net in Figure~\ref{fig:tree2ring:simple:init}.

The resulting net is depicted in Figures~\ref{fig:newCPN:init}, \ref{fig:newCPN:main} and \ref{fig:newCPN:term}. The figures are meant to show the modifications on the structure of the net, and the inscriptions are not altered.

Additional simplifications are not possible: even though we could be tempted to get rid of \texttt{AC} messages that are immediately transformed into \texttt{BC} messages by transition \texttt{T5}.  This modification would not be sound. Indeed, the successor information for a process \texttt{p} must be updated by process \texttt{p} itself. This is obviously the case for transition \texttt{T1}. The same holds for transition \texttt{T6} where process \texttt{r} receives a \texttt{BC} message and updates its successor information. However, if transition \texttt{T4a} were to send immediately a \texttt{BC} message, it should be \texttt{(BC,q,q,I)} (to be the same as the one generated by \texttt{T4aT5}). But then transition \texttt{T4a} would handle reception of an \texttt{Info} message by process \texttt{r}, as well as the sending of a message by its child \texttt{q}, hence not the same process and thus not consistent.

We could equally well decide to remove place \texttt{Succ} and keep
place \texttt{Pred}. In this case, \texttt{FC} messages remain while
\texttt{BC} messages are no more necessary. Hence transition
\texttt{T6} is also deleted.%
The resulting net is depicted in Figures~\ref{fig:newCPN2:init},
\ref{fig:newCPN2:main} and \ref{fig:newCPN2:term}.

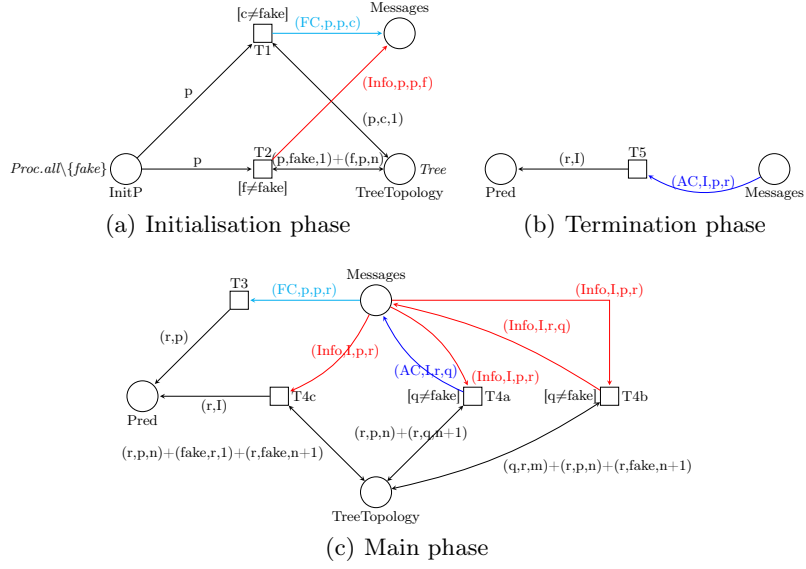
\begin{figure*}[!!htb]
\begin{center}
\subfigure[\label{fig:newCPN2:init}Initialisation phase]{
\scalebox{.6}{
    \begin{tikzpicture}[->,>=stealth,node distance=3cm,inner sep=1.5pt,auto]
        \node [place,label=below:InitP,label=left:{\it Proc.all$\setminus\{$fake$\}$}] (allP) {};
        \node [transition,label=above:T2,label=below:{[f$\neq$fake]},right of=allP] (T2) {}
                edge [pre] node [swap] {p} (allP);
        \node [transition,label=below:T1,label=above:{[c$\neq$fake]},above of=T2] (T1) {}
                edge [pre] node [swap] {p} (allP);
        \node [place,label=below:TreeTopology,right of=T2,label=right:{\it Tree}] (topo) {}
                edge [<->] node [swap,near start] {(p,c,1)} (T1)
                edge [<->] node [swap] {(p,fake,1)+(f,p,n)} (T2);
        \node [place,label=above:Messages,right of=T1] (mess) {}
                edge [pre,cyan] node [swap,cyan] {(FC,p,p,c)} (T1)
                edge [pre,red] node [near start,red] {(Info,p,p,f)} (T2);
    \end{tikzpicture}
  }
}
\subfigure[\label{fig:newCPN2:term}Termination phase]{
\scalebox{.6}{
    \begin{tikzpicture}[->,>=stealth,node distance=3cm,inner sep=1.5pt,auto]
        \node [place,label=below:Pred] (pred) {};
        \node [transition,label=above:T5,right of=pred] (T5) {}
                edge [post] node [swap] {(r,I)} (pred);
        \node [place,label=below:Messages,right of=T5] (mess) {}
                edge [post,bend left,blue] node [swap,blue] {(AC,I,p,r)} (T5);
    \end{tikzpicture}
  }
}
\\
\subfigure[\label{fig:newCPN2:main}Main phase]{
\scalebox{.6}{
    \begin{tikzpicture}[->,>=stealth,node distance=3cm,inner sep=1.5pt,auto]
        \node [place,label=below:Pred] (pred) {};
        \node [transition,label=right:T4c,right of=pred] (T4c) {}
                edge [post] node {(r,I)} (pred);
        \node [place,label=above:Messages,above right of=T4c] (mess) {}
                edge [post,bend left=20,red] node [swap,red,xshift=1cm] {(Info,I,p,r)} (T4c);
        \node [transition,label=above:T3,left of=mess] (T3) {}
                edge [pre,cyan] node [cyan] {(FC,p,p,r)} (mess)
                edge [post] node [swap] {(r,p)} (pred);
        \node [place,label=below:TreeTopology,below right of=T4c] (topo) {}
                edge [<->] node {(r,p,n)+(fake,r,1)+(r,fake,n+1)} (T4c);
        \node [transition,label=right:T4a,below right of=mess,label=left:{[q$\neq$fake]}] (T4a) {}
                edge [<->] node [swap,xshift=1cm] {(r,p,n)+(r,q,n+1)} (topo)
                edge [pre,bend right=20,red] node [swap,at start,red] {(Info,I,p,r)} (mess)
                edge [post,bend left=20,blue] node [xshift=1cm,blue] {(AC,I,r,q)} (mess);
        \node [transition,label=right:T4b,label=left:{[q$\neq$fake]},right of=T4a] (T4b) {}
                edge [<->,bend left=10] node {(q,r,m)+(r,p,n)+(r,fake,n+1)} (topo)
                edge [post,bend right=10,red] node [swap,red] {(Info,I,r,q)} (mess);
        \draw [pre,red] (T4b) |- node [swap,red] {(Info,I,p,r)} (mess);
    \end{tikzpicture}
  }
}
\caption{Predecessor-based CPN model}
\end{center}
\end{figure*}

\subsection{New Algorithms}
\label{sec:newalgo}


\setlength{\algomargin}{0em}
\SetInd{0em}{.75em}
\LinesNotNumbered
\begin{algorithm*}[!!htb]
\scriptsize
\begin{multicols}{2}
\Const{\\
~ $\mathit{Parent}: \PIDS$ \\
~ $\mathit{Children}: \mathit{List}(\PIDS)$   \\
~ $\mathit{Id}: \PIDS$  
 }
\KwOut{\\
  ~$\mathit{Succ}:  \PIDS$
}

\Init{

\ref{tr1}\Rule($\mathit{Children} \neq \emptyset \rightarrow$){ 
   $\mathit{Succ} = \mathit{First}(\mathit{Children})$ \;
}

\ref{tr2}\Rule($\mathit{Children} = \emptyset \rightarrow$){
      $\send{(\mathit{Info}, \mathit{Id})}{\mathit{Parent}}$ \;
    }
}
\Run{

\ref{tr4}\Rule($\recv{(\mathit{Info}, I)}{p} \rightarrow$){
     \If{$p \in \mathit{Children}$}{
       let $ q = \mathit{next}(p, \mathit{Children})$ \;
       \eIf{$q \neq \bottom$}{$\send{(\mathit{Ask\_Connect},I)}{q}$ \;}
       {\eIf{$\mathit{Parent} \neq \bottom$}{$\send{(\mathit{Info}, I)}{\mathit{Parent}}$ \;}
         {
           $\send{(\mathit{B\_Connect}, \mathit{Id})}{I~}$ \;
         }}
     }
   }

\ref{tr5}\Rule($\recv{(\mathit{Ask\_Connect}, I)}{p} \rightarrow$){
     $\send{(\mathit{B\_Connect}, \mathit{Id})}{I}$ \;
   }

\ref{tr6}\Rule($\recv{(\mathit{B\_Connect}, I)}{p} \rightarrow$){
     $\mathit{Succ} = I$ \;
   }
}
\end{multicols}
\caption{Successor-based algorithm\label{algo:tree2ring:simple}} 
\end{algorithm*}


Algorithm~\ref{algo:tree2ring:simple} shows the corresponding simplified algorithm, where variable \texttt{Pred} has been removed, as well as \texttt{FC} messages. Note that rule~\ref{tr3} is not necessary anymore, and that some parts of the algorithm are more balanced.

Indeed, the initialisation part is such that leaf processes send \texttt{Info} messages while the others only update their successor value (rules~\ref{tr1} and \ref{tr2}), while in the end (rules~\ref{tr5} and \ref{tr6}) leaf processes update their successor value and the others send a message to a leaf. The main part (rule~\ref{tr4}) also features a balanced treatment of \texttt{Info} messages: all types of process send a single message only.

Algorithm~\ref{algo:tree2ring:simple2} is similar, for the case where
variable \texttt{Succ} has been removed, as well as \texttt{BC}
messages.

\begin{figure}[!!htb]
  \begin{minipage}{.45\textwidth}
{\scriptsize
\begin{algo}
\Const{\\
~ $\mathit{Parent}: \PIDS$ \\
~ $\mathit{Children}: \mathit{List}(\PIDS)$  \\ 
~ $\mathit{Id}: \PIDS$ 
 }
\KwOut{\\
  ~$\mathit{Pred}:  \PIDS$
}

\Init{

\ref{tr1}\Rule($\mathit{Children} \neq \emptyset \rightarrow$){ 
   $\send{(\mathit{F\_Connect}, \mathit{Id})}{\mathit{First}(\mathit{Children})}$ \;
}

\ref{tr2}\Rule($\mathit{Children} = \emptyset \rightarrow$){
      $\send{(\mathit{Info}, \mathit{Id})}{\mathit{Parent}}$ \;
    }
}

\Run{

\ref{tr3}\Rule($\recv{(\mathit{F\_Connect}, I)}{p} \rightarrow$){
     \lIf{$p = \mathit{Parent}$}{$\mathit{Pred} = I$ \;}
    }

\ref{tr4}\Rule($\recv{(\mathit{Info}, I)}{p} \rightarrow$){
     \If{$p \in \mathit{Children}$}{
       let $ q = \mathit{next}(p, \mathit{Children})$ \;
       \eIf{$q \neq \bottom$}{$\send{(\mathit{Ask\_Connect},I)}{q}$ \;}
       {\eIf{$\mathit{Parent} \neq \bottom$}{$\send{(\mathit{Info},
       I)}{\mathit{Parent}}$ \;}
         {
           $\mathit{Pred} = I$ \;
         }}
     }
   }

\ref{tr5}\Rule($\recv{(\mathit{Ask\_Connect}, I)}{p} \rightarrow$){
     $\mathit{Pred} = I$ \;
   }

}

\caption{Predecessor-based algorithm\label{algo:tree2ring:simple2}} 
\end{algo}
}

  \end{minipage}
\end{figure}

\subsection{The algorithms comparison}
\label{sec:compare}

Both new algorithms send less messages than the original since, in each case, there is one type of message which is no longer used. Let $m_k$ be the number of messages sent by algorithm $k$, $n_l$ the number of leaf nodes and $n$ the total number of nodes in the tree. We have:
$m_2 = m_1 - 2(n - n_l)$ and $m_3 = m_1 - 2 n_l$.

Therefore, the algorithm to apply depends on the structure of the
tree: if there are more leaf nodes than other nodes, %
\ifARXIV
Algorithm~\ref{algo:tree2ring:simple2}
\else
the predecessor-based version of the algorithm
\fi
is preferred;  otherwise, %
\ifARXIV
\else
the successor-based version of the algorithm (%
\fi
Algorithm~\ref{algo:tree2ring:simple}%
\ifARXIV
.
\else
).
\fi

\section{Experimental confirmation of the Algorithm}
\label{sec:Experiments}

While the formal results presented above in the paper can stand on their own, we
confirmed the results experimentally.  We built a graphical model of the Petri net
of Figures~\ref{fig:tree2ring:simple:init},  \ref{fig:tree2ring:simple:term} and \ref{fig:tree2ring:simple:main}
in the \emph{CosyVerif} tool~\cite{PODC13} and proceeded to analyse the state 
space using the \emph{prod} reachability analyser as a backend tool~\cite{CAV05}.  

The example of Figure~\ref{fig:algo:initial} was entered as the initial marking and 
\emph{prod} reported that the state space consisted of 1,275,750 nodes, 
9,470,925 arcs with one terminal node.  The terminal node was manually examined
to confirm that it represented the appropriate ring structure.

Further validation is considered in the following subsections.

\subsection{Exploring different topologies with a pre-initialisation phase}

As described in Section~\ref{sec:CPNsimple}, the algorithm was modelled as a Petri net 
with an \emph{Initialisation}, \emph{Main} and \emph{Termination} phases.  Rather
than just consider one topology, we introduced a pre-initialisation phase to generate
an arbitrary topology from a given number of nodes.  The Petri net segment for
this is given in Figure~\ref{fig:experiment:preinitialisation}.

\begin{figure*}[!!htb]
\begin{center}
\epsfig{file=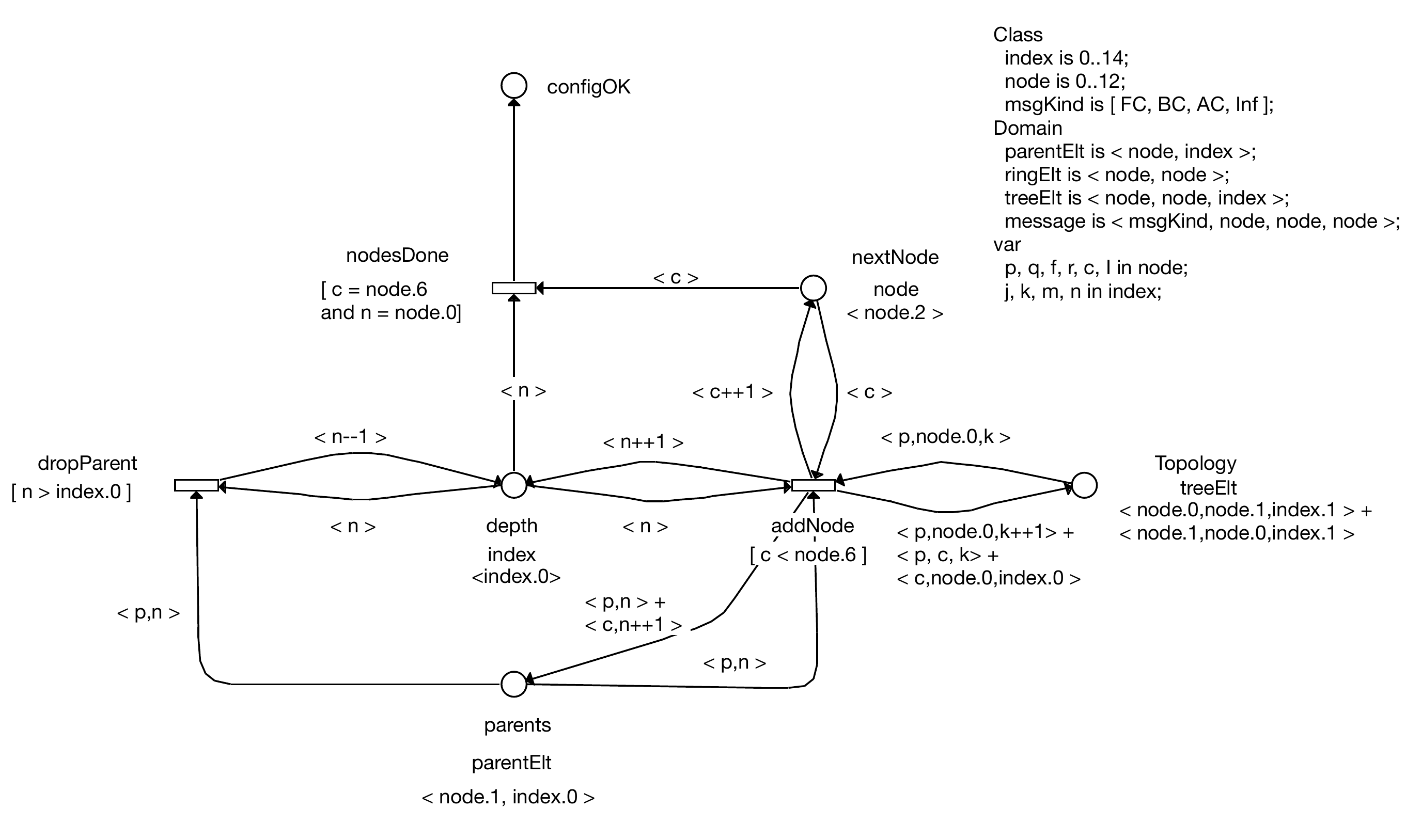, scale=0.45}
\caption{Pre-initialisation phase}
\label{fig:experiment:preinitialisation}
\end{center}
\end{figure*}

The pre-initialisation phase was designed so that each topology would be generated with 
a unique labelling of nodes. Further, the labels would be such that a depth-first traversal of
the tree would result in a ring with node labels in increasing numeric order.  To achieve this,
the last node added to the tree is stored in place \emph{parents} together with its parents
and their associated depths in the tree.  The depth of the last node added is given
in place \emph{depth}.  In order to add another node --- with the numeric label given in 
place \emph{nextNode} --- there are two possible options: either the new node is
added as a child of the last node added (with the transition \emph{addNode}) or the 
last node can be discarded (with transition \emph{dropParent}) and its immediate
parent becomes a possible candidate for the parent of the new node.  Transition
\emph{dropParent} can drop all but the first node, in which case the next node to be
added will be a child of the root.  Transition \emph{addNode} can fire for all nodes up
to a given node number.  When that node number is reached, transition 
\emph{nodesDone} can fire and add a token to place \emph{configOK}.  This place then 
becomes a side condition of the initialisation transitions \emph{T1} and \emph{T2}
in Figure~\ref{fig:tree2ring:simple:init}.

With the above pre-initialisation phase, we obtain the state space results of 
Table~\ref{tab:statespaces}. The first column indicates the number of nodes in the tree to be processed.  The
column labelled \emph{Topologies} indicates the number of tree topologies which
can be generated with that number of nodes (as given by the pre-initialisation phase 
above).  The column labelled \emph{Pre} gives the time taken to execute the 
pre-initialisation phase for that number of nodes.  The next three columns give the 
total number of nodes and arcs in the state space for all those topologies (combined)
as well as the time to process those topologies in seconds.

{\footnotesize{
\begin{table}[tbp]
\begin{center}
\caption{State space results for self stabilisation algorithm.}\label{tab:statespaces}
\begin{tabular}{cc||rrrr|rrrr}
    & & \multicolumn{4}{c|}{Initial state space}             
    & \multicolumn{4}{c}{Reduced state space} \\
Nodes & Topologies  & Pre    & Nodes   & Arcs         & Sec            &  Pre   & Nodes   & Arcs         &  Sec          \\ \hline \hline
2 &     1 &  0.001 &      11 &         17 &   0.001 &  0.006 &    5 &    5 &     0.263 \\ 
3 &     2 &  0.001 &     103 &        229 &   0.004 &  0.014 &   17 &   17 &     1.078 \\ 
4 &     5 &  0.001 &   1,123 &      3,314 &   0.057 &  0.038 &   60 &   60 &   6.210 \\ 
5 &    14 &  0.002 &  13,474 &     49,807 &   0.983 &  0.176 &  217 &  217 &   48.607 \\ 
6 &    42 &  0.006 & 172,248 &    766,076 &  17.835 &    0.976 &  798 &  798 &   381.137 \\ 
7 &   132 &  0.010 & 2,301,624 & 11,968,306 & 328.869 &   6.220 &    &      &   >1800 \\ 
8 &   429 &  0.063 &  >10M   &     >50M   &   >1800 &    39.042 &    &      &    \\ 
9 &  1430 &  0.230 &         &            &         &   232.136 &    &      &    \\ 
10 & 4862 &  0.902 &         &            &         &  1802.274 &      &       &   \\ 

\hline \hline
\end{tabular}
\end{center}
\end{table}
}}

With the pre-initialisation phase, each run of the Petri net will consider a number of 
topologies and that will be the number of terminal states.  In order to ensure that the only
difference between the terminal states is due to the different starting topologies, we
can add a post-termination phase to remove the topology from the state, \ie empty the 
tokens out of place \emph{Topology}.
It is a simple change to the net, and its addition does confirm that despite processing
a number of topologies, there is only one terminal node with the correct ring 
structure.

\subsection{State space reduction}

In the algorithm we note that there are a number of messages  exchanged between a node and 
what will eventually be its successor and predecessor as well as the intermediate nodes.  
We hypothesise that this set of messages is independent of the sets of messages
exchanged between other pairs of nodes.  In other words, the complexity of the 
algorithm is largely due to the arbitrary interleaving of the message transmission.

Accordingly, we considered reducing the state space by the \emph{stubborn set
technique}~\cite{ATPN88} which eliminates much of the interleaving while maintaining 
the terminal states.  This form of reduction can be activated by running \emph{prod} 
with option \emph{-s}.

It turns out that there are a number of complexities in applying this technique.  
Firstly, the \emph{prod} tool applies the technique to the unfolded net --- it unfolds the
Coloured Petri Net into a Place Transition Net.  The size of
the unfolded net is determined not just by the transitions which can fire in the 
coloured state space but by the possible range of values for the tokens. 

Consequently, our initial result was that none of our test cases --- even the
pre-initialisation phase --- reached their terminal states in less than 30 minutes!  
(We had allowed for up to 12 nodes and up to 14 children per node.)
Therefore, the nets were modified to reduce the ranges of values for node labels and 
for child indices to be slightly larger than required for the example under consideration.  
With these modifications, we were able to reach terminal states for some of our test cases.

The second complexity of the stubborn set method is the computational penalty for 
computing the stubborn sets.  This means that this reduction technique may or may 
not be effective in reducing the state space.

In Table~\ref{tab:statespaces}, the last four columns give similar results to the 
preceding four columns, but this time using the stubborn set technique to reduce
the state space.  The results clearly show that the size of the state space can be
reduced but that the computational penalty can be overwhelming.  Even the 
pre-initialisation phase can be very expensive.

\section{Conclusions}
\label{sec:conclusions}

This paper has demonstrated the benefits of using formal techniques in the analysis of 
a non-trivial distributed algorithm for converting a tree structure of processes (or
processors) into a ring structure.  

In such an exercise, the choice of formalism is significant.  The Petri Net formalism 
has proved to be ideal because of its support for concurrency and the variability of 
sequencing and timing of concurrent processes.  

We built a model of the distributed algorithm and then validated it, 
\ie ensured that it accurately reflected the modelled system.  In our case, it was 
important to ensure that the model faithfully reflected the distributed nature of the 
algorithm.  Thus, we examined each transition to ensure that it only accessed 
information local to a given process.

Having modelled and validated the system, we observed that without adding any new information,
the making explicit of the source and target of each message facilitated the
identification of some invariant and liveness properties.  These were then utilised
to prove termination, correctness of the algorithm, and that it was self-stabilising and silent. 
These properties could easily be exhibited on the model, but they are far 
from obvious when considering the algorithm itself.

Further, the above properties helped us to identify non-essential information which 
then allowed us to simplify the algorithm, leading to a more efficient one.

We also employed automated tools to explore the state space of the system.  This 
validated our earlier results and confirmed that the complexity of the algorithm was due
to the level of concurrency, which was reflected in the large state space.  While this 
state space could be significantly reduced using the stubborn set technique, the cost of 
doing so quickly became prohibitive. 

The approach adopted in this paper presents several advantages: first, proving invariant properties induces that the algorithm is correct whatever the initial tree topology ; second, the encoding of the network topology is crucial, and the approach can be generalised to other algorithms provided a suitable encoding of the topology they address.

\bibliographystyle{alpha}
\bibliography{verifproof}

\end{document}
